\newtheorem{cl}{Claim}
\date{}
\title{Hamiltonicity in Convex Bipartite Graphs}
\author{Kowsika P \and Divya V  \and Sadagopan N} 
\institute{Department of Computer Science and Engineering,\\ Indian Institute of Information Technology Design and Manufacturing, Kancheepuram, India. \\
\email{\{coe15b008,ced15i024,sadagopan\}@iiitdm.ac.in}}
\begin{document}
\maketitle
\thispagestyle{plain}
\titlerunning{Hamiltonicity in Convex Bipartite Graphs}
\begin{abstract}
For a connected graph, the Hamiltonian  cycle (path) is a simple cycle (path) that spans all the vertices in the graph.   It is known from \cite{muller,garey} that HAMILTONIAN CYCLE (PATH) are NP-complete in general graphs and chordal bipartite graphs.  A convex bipartite graph $G$ with bipartition $(X,Y)$ and an ordering $X=(x_1,\ldots,x_n)$, is a bipartite graph such that for each $y \in Y$, the neighborhood of $y$ in $X$ appears consecutively.  $G$ is said to have convexity with respect to $X$.  Further, convex bipartite graphs are a subclass of chordal bipartite graphs. In this paper, we present a necessary and sufficient condition for the existence of a Hamiltonian cycle in convex bipartite graphs and further we obtain a linear-time algorithm for this graph class.  We also show that Chvatal's necessary condition is sufficient for convex bipartite graphs.  The closely related problem is HAMILTONIAN PATH whose complexity is open in convex bipartite graphs.  We classify the class of convex bipartite graphs as {\em monotone} and {\em non-monotone} graphs.  For monotone convex bipartite graphs, we present a linear-time algorithm to output a Hamiltonian path.  We believe that these results can be used to obtain algorithms for Hamiltonian path problem in non-monotone convex bipartite graphs.  It is important to highlight (a) in \cite{keil,esha}, it is incorrectly claimed that Hamiltonian path problem in convex bipartite graphs is polynomial-time solvable by referring to \cite{muller} which actually discusses Hamiltonian cycle  (b) the algorithm appeared in \cite{esha} for the longest path problem (Hamiltonian path problem) in biconvex and convex bipartite graphs have an error and it does not compute an optimum solution always.  We present an infinite set of counterexamples in support of our claim.  
\footnotetext{This work is partially supported by DST-ECRA project - ECR/2017/001442}
\end{abstract}
\section{Introduction}
Classical problems such as HAMILTONIAN PATH (CYCLE), LONGEST PATH have applications in the field of mathematics and computing, for example, \cite{garey}.  Given a connected graph $G$, the Hamiltonian path (cycle) problem asks for a simple spanning path (cycle) in $G$. 

While there has been a constant study on the structural front to obtain a necessary and sufficient condition for the existence of {a} Hamiltonian path (cycle), on the complexity front, it is known that these problems are NP-complete \cite{garey}.  We do not know of any necessary and sufficient condition for the existence of  {a} Hamiltonian path (cycle) in general graphs, however, we know of necessary conditions, and sufficient conditions \cite{west}.

To understand the complexity of NP-complete problems, it is natural to either restrict the input based on some structural observations or to focus on special graphs.  Focusing on special graphs such as chordal graphs, chordal bipartite graphs, etc., is a promising direction as problems such as VERTEX COVER, CLIQUE, COLORING have polynomial-time algorithms on chordal and chordal bipartite graphs.  As far as {the} Hamiltonian path and cycle problems are concerned, both structural and algorithmic study even in popular special graphs {is} considered to be a challenging task.  This is due to, (i) there are no necessary and sufficient conditions for the existence of Hamiltonian path in special graphs such as chordal and chordal bipartite which are the most sought after graph classes in the literature, (ii) these problems are NP-complete even in chordal and chordal bipartite graphs \cite{muller}.

When a problem is NP-complete in chordal bipartite graphs, it is natural to restrict the input further and study the complexity status on convex bipartite graphs which {are} a well-known subclass of chordal bipartite graphs.  Due to the convexity property, it is easy to see that any cycle of length at least 6 has a chord in it and hence, convex bipartite graphs are a subclass of chordal bipartite graphs.  Similarly, interval graphs which {are} a subclass of chordal {graphs} is a candidate graph class for problems that are NP-complete in chordal graphs.   For example, HAMILTONIAN CYCLE and VERTEX COVER are polynomial-time solvable in {the} interval and convex bipartite graphs. Interestingly, convex bipartite graphs and interval graphs have a structural relationship and hence there is a polynomial-time reduction that converts an instance of interval graph to an instance of convex bipartite graph, and vice-versa.  Further, for many problems, algorithms designed for interval graphs can be used to solve problems in convex bipartite graphs.  One such problem is HAMILTONIAN CYCLE whose solution in convex bipartite graphs can be obtained from interval graphs \cite{keil}.  The algorithm presented in \cite{keil} requires $O(n+m)$ effort to output a Hamiltonian cycle in a convex bipartite {graph} if exists.

The study of Hamiltonicity includes HAMILTONIAN CYCLE and HAMILTONIAN PATH.  HAMILTONIAN CYCLE is polynomial-time solvable in convex bipartite graphs \cite{muller,keil} whereas, to the best of our knowledge, the complexity of HAMILTONIAN PATH is open.  The objective of this paper is {two-fold}; (a) we present a characterization for HAMILTONIAN CYCLE as there are no known structural characterizations for the existence of Hamiltonian cycles in convex bipartite graphs.   Further, we present a linear-time algorithm for this problem.  (b) we also present a linear-time algorithm for Hamiltonian path problem in monotone convex bipartite graphs.  We believe that the results presented can be extended for further research to study the complexity of HAMILTONIAN PATH in non-monotone convex bipartite graphs and LONGEST PATH and MIN-LEAF SPANNING TREE in convex bipartite graphs.   We wish to highlight that the algorithms appeared in \cite{esha} for longest path problems in biconvex and convex graphs (convex bipartite graphs) have an error and it does not compute an optimum solution always.  We present an infinite set of {counterexamples} in support of our claim.   All {of our counterexamples} are {\em non-monotone} convex bipartite graphs and for {\em monotone} convex bipartite graphs, the algorithm presented in \cite{esha} gives a {Hamiltonian} path by incurring $O(n^6)$ effort, while our algorithm is linear. \\\\
{\bf {Roadmap}:} We next present preliminaries required to present our results.  We shall present results on HAMILTONIAN CYCLE in Section \ref{hcycle}. In Section \ref{hpath}, we shall present our results on HAMILTONIAN PATH.  Results on non-monotone convex bipartite graphs {are} presented in Section \ref{nonmono}.
\section{Graph Preliminaries}
All graphs used in this paper are simple, connected and unweighted and we follow the standard definitions and notation \cite{west}.  For a graph $G$, denote the vertex set by $V(G)$ and the edge set by $E(G)=\{\{u,v\} ~|~u$ is adjacent to $v$ in $G$ $\}$.  The neighborhood of a vertex $v$ in $G$ is $N_G(v) = \{u ~|~ \{u, v\} \in E(G)\}$.  The degree of a vertex $v$ in $G$ is $d_G(v) = |N_G(v)|$.  We denote $\delta(G) = \min \{d_G(v) ~|~ v \in V(G)\}$.  A vertex $v$ is said to be {\em pendant} if $d_G(v)=1$. \\
A bipartite graph $G(X,Y)$ with partitions $X$ and $Y$ is a convex bipartite graph if there is an ordering of $X=(x_1,\ldots,x_n)$ such that for all $y \in Y$, $N_G(y)$ is consecutive with respect to the ordering of $X$, and $G$ is said to have convexity with respect to $X$.  Let $n=|X|$ and $m=|Y|$.  Define the set $X_{p..q} \subset X$ to be $X_{p..q}=\{x_p,\ldots,x_q\}, 1 \leq p < q \leq n$, the set $N_G[X_{p..q}] = \{y ~|~ N_{G}(y) \subseteq \{x_{p},\ldots,x_{q}\}\}$ and the set $N'_G[X_{p..q}] = \{y ~|~ y \in Y$ and $|N_{G}(y) \cap X_{p..q}| \geq 2\}$.   For a vertex $y \in Y$ with $N_G(y)=\{x_i,\ldots,x_j\},1 \leq i < j \leq n$, we define $left(y)=i$ and $right(y)=j$.  
The set $X_{p..q} \subset X$ is said to be {\em maximal} if $|N_G[X_{p..q}]| = q-p+1$ and there is no $X_{p'..q'}$ such that $|N_G[X_{p'..q'}]| = q'-p'+1$, { $1 < p' < p < q \le q'< n$ or $1 < p' \le p < q < q'< n$}, and we refer to this set as {\em maximal}$(X_{p..q})$.  Similarly, the set $X_{p..q}$ is said to be {\em minimal} if $|N_G[X_{p..q}]| = q-p+1$ and there is no $X_{p'..q'}$ such that $|N_G[X_{p'..q'}]| = q'-p'+1$, {$1 < p < p' < q' \leq q <n$ or $1 < p \leq p' < q' < q <n$}, and we refer to this set as {\em minimal}$(X_{p..q})$.  Throughout out this paper, $G$ denotes a convex bipartite graph with convexity on $X$. 
\section{Hamiltonian cycles in convex bipartite graphs}\label{hcycle}
In this section, we shall present a characterization for the existence of Hamiltonian cycles in convex bipartite graphs.   We shall define a property, namely {\tt Property A} in convex bipartite graphs.  Further, we present a transformation to transform an input convex bipartite graph into an interval graph.  Subsequently, {we show that Hamiltonian cycles in  convex bipartite graphs exist if and only if  Hamiltonian cycles in the interval graphs exist}.   Our structural results are new and a part of algorithmic results make use of the results presented in \cite{muller,keil}.  Towards the end, we shall visit the Hamiltonian path problem and present a solution for monotone convex bipartite graphs.
\begin{definition} Let $G$ be a convex bipartite graph. $G$ is said to satisfy {\tt Property A}, if it satisfies the following properties{, } \\
\begin{enumerate}
\item Upper bound: \\
{For any set} $X_{p..q}$,\\
$|N_G[X_{p..q}]| \leq q-p, \text{if $1=p<q<n$ or $1<p<q=n$ or $1<p<q<n$}$.\\
\item Lower bound: \\
{For any integer $j\in [1..n-1]$},\\
\begin{equation*}
	|\bigcup_{i=1}^j N'_G[X_{p_i..q_i}]| \geq (\sum_{i=1}^{j-1} q_i-p_i)+ q_j-p_j, 
	\begin{cases} & \text{if $j>1$, $ 1\leq p_i<q_i \leq p_{i+1}<q_{i+1}\leq n $}\\ & \text{if $j=1$, $1\leq p_j<q_j\leq n$}\end{cases}
\end{equation*}
\item {Cardinality bound}: \\ $|N_G[X_{1..n}]|=|N'_G[X_{1..n}]|=n$\\
\end{enumerate}
\end{definition}
{\bf Transformation: Convex bipartite graphs to Interval graphs} \\
Let $G$ be a convex bipartite graph with $|X|=|Y|$.  We shall now define {the corresponding interval graph} $I$ as follows; $V(I)= X \cup Y$ and $E(I)=E(G) \cup E'$, where $E'=\{ \{y_i,y_j\} ~|~ y_i,y_j \in Y$ and $N_G(y_i)\cap N_G(y_j) \neq \emptyset, 1 \leq i \not= j \leq |Y| \}$.  Interestingly, Muller \cite{muller} has related {the} Hamiltonian cycles in interval graphs and convex bipartite graphs and his result is given below. 
\begin{theorem} \cite{muller}
Let $G$ be a convex bipartite graph with $|X|=|Y|$ and $I$ be the corresponding interval graph of $G$. $G$ has a Hamiltonian cycle iff $I$ has a Hamiltonian cycle. 
\end{theorem}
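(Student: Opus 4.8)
The plan is to prove the two directions of the equivalence separately, relying on the single structural feature of the transformation that matters here: the new edge set $E'$ lies entirely inside $Y$, so that $V(I)=V(G)$, $E(G)\subseteq E(I)$, and $X$ is an independent set in $I$ (no edge of $I$ joins two vertices of $X$, since every added edge of $E'$ has both endpoints in $Y$).

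The forward implication is immediate. If $G$ has a Hamiltonian cycle $C$, then $C$ spans $V(G)=V(I)$ and every edge of $C$ lies in $E(G)\subseteq E(I)$, so the same cyclic sequence is a Hamiltonian cycle of $I$. No use of the extra edges, nor of the hypothesis $|X|=|Y|$, is needed here.

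The reverse implication is the part that requires an idea, and I would settle it by a counting argument rather than by explicitly rerouting the cycle. Suppose $I$ has a Hamiltonian cycle $C$; it visits all $2n$ vertices, $n$ from $X$ and $n$ from $Y$. Because $X$ is independent in $I$, no two $X$-vertices are consecutive on $C$, so deleting the $n$ vertices of $X$ from $C$ breaks it into exactly $n$ arcs, each a \emph{nonempty} block of consecutive $Y$-vertices. These arcs together contain all $n$ vertices of $Y$; hence $n$ nonempty arcs partition exactly $n$ vertices, forcing each arc to consist of a single $Y$-vertex. Therefore $C$ alternates $X,Y,X,Y,\ldots$, and every edge of $C$ joins an $X$-vertex to a $Y$-vertex. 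Since the only $X$--$Y$ edges of $I$ are the edges of $G$ (the edges of $E'$ being internal to $Y$), every edge of $C$ lies in $E(G)$, so $C$ is already a Hamiltonian cycle of $G$.

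I expect the main obstacle to be exactly this reverse direction: a priori a Hamiltonian cycle of $I$ could exploit the added $Y$--$Y$ edges and thereby traverse several $Y$-vertices in a row, a configuration that has no counterpart in the bipartite graph $G$. The resolution is that the exact balance $|X|=|Y|$, combined with the independence of $X$, forces perfect alternation and thereby rules out every added edge automatically. This also explains why the hypothesis $|X|=|Y|$ is essential and is invoked only in this direction.
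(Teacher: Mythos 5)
Your proof is correct and complete, but note that the paper itself offers no proof of this statement at all: it is imported verbatim from Muller's paper \cite{muller} as a black box, so there is no internal argument to compare against. Your forward direction is the trivial inclusion $E(G)\subseteq E(I)$ over the common vertex set, and your reverse direction is the right idea: since $E'$ lies entirely inside $Y$, the set $X$ stays independent in $I$, so the $n$ vertices of $X$ cut any Hamiltonian cycle of $I$ into exactly $n$ nonempty arcs of $Y$-vertices, and the balance $|X|=|Y|=n$ forces each arc to be a single vertex; perfect alternation then excludes every edge of $E'$, and the surviving $X$--$Y$ edges are precisely edges of $G$. This counting argument is airtight (the cyclic structure guarantees the number of maximal $Y$-runs equals the number of $X$-vertices, and $n$ nonempty runs covering $n$ vertices are singletons). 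It is worth observing that your alternation lemma is strictly stronger than what the paper later extracts from this construction: in the proof of Theorem \ref{hci2} the authors argue only that the \emph{particular} cycle built by Algorithm \ref{hcalgo} avoids the edges of $E'$, relying on the two claims inside Theorem \ref{hci}, whereas your argument shows that \emph{every} Hamiltonian cycle of $I$ alternates between $X$ and $Y$ and hence automatically lives in $G$ --- so your proof would subsume that step of the paper as well. One cosmetic remark: in the forward direction the hypothesis $|X|=|Y|$ is not merely unused but automatic, since a Hamiltonian cycle in a bipartite graph already forces the two sides to be equinumerous; saying so would slightly sharpen your closing comment about where the hypothesis is essential.
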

It is a well-known result \cite{golu} that the maximal cliques of an interval graph $I$ can be linearly ordered, such that, for every vertex $x$ of $I$, the maximal cliques containing $x$ occurs consecutively in the ordering.  To present our results on HAMILTONIAN CYCLE in {the corresponding interval graphs}, we shall reuse the definitions and {notation} given in \cite{keil} by Keil. \\
A maximal clique in $I$ corresponds to a set of mutually overlapping intervals in $I$.  Every clique $C$ in $I$ has a {\em representative point} $r$ which is contained in exactly those intervals corresponding to vertices in $C$.  That the cliques of $I$ can be linearly ordered by their representative points follows from the following lemma from Gilmore and Hoffman \cite{gandh}.
\begin{lemma} \cite{gandh}
The maximal cliques of an interval graph $I$ can be linearly ordered, such that, for every vertex $x$ of $I$, the maximal cliques containing $x$ occurs consecutively.
\end{lemma}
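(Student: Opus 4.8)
The plan is to work directly with an interval representation of $I$ and to exploit the Helly property of intervals on the line. Since $I$ is an interval graph, I would fix a representation assigning to each vertex $x$ an interval $I_x = [l_x, r_x] \subseteq \mathbb{R}$, so that two vertices are adjacent iff their intervals overlap. The central observation is that any clique of $I$ is a family of pairwise-overlapping intervals, and by the one-dimensional Helly theorem such a family has a common point. Thus every clique $C$ admits a point contained in every interval of $C$, and this is precisely what allows one to attach a \emph{representative point} to each maximal clique, as the statement anticipates.

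First I would make the correspondence between maximal cliques and points precise by a left-to-right sweep. For a point $r \in \mathbb{R}$, let $S(r) = \{x \in V(I) : r \in I_x\}$; this is always a clique, since all its intervals contain the common point $r$. As $r$ increases, $S(r)$ gains a vertex exactly when $r$ crosses a left endpoint $l_x$ and loses one exactly when $r$ crosses a right endpoint $r_x$, so $S(r)$ takes only finitely many distinct values. I would then argue that the maximal cliques of $I$ are precisely the inclusion-maximal sets of the form $S(r)$. The key step is the maximality argument: if $r$ lies in the common intersection of the intervals of a maximal clique $C$ and also belongs to some interval $I_y$, then $I_y$ meets every interval of $C$, so $y$ is adjacent to all of $C$, and maximality of $C$ forces $y \in C$. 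Hence $S(r) = C$; that is, the representative point of $C$ lies in exactly the intervals of the vertices of $C$.

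Next I would order the maximal cliques by the positions of their representatives. Since the representative point $r$ of a maximal clique $C$ satisfies $S(r) = C$, distinct maximal cliques necessarily receive distinct representative points, so ordering them by the left-to-right positions $r_1 < r_2 < \cdots < r_k$ of their representatives gives a well-defined total order $C_1, \ldots, C_k$, the desired linear ordering.

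Finally, the consecutiveness property would follow from the convexity of each interval. A vertex $x$ lies in $S(r)$ iff $r \in [l_x, r_x]$, so the set of sweep positions at which $x$ appears is a single interval of the line; consequently the representative points falling inside $[l_x, r_x]$ form a contiguous block, and therefore the maximal cliques containing $x$ occur consecutively in the ordering. I expect the main obstacle to be the maximality argument that identifies maximal cliques with their representative points (and thereby guarantees the representatives are distinct and totally ordered); once that correspondence is nailed down, consecutiveness is an immediate consequence of the fact that each $[l_x, r_x]$ is order-convex, i.e. a connected subset of the line.
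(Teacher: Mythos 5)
Your proposal is correct. Note that the paper itself gives no proof of this lemma --- it is quoted verbatim from Gilmore and Hoffman \cite{gandh} as a known result --- so there is no internal argument to compare against; however, the paper's surrounding text (a maximal clique corresponds to a set of mutually overlapping intervals, each clique $C$ has a representative point $r$ contained in exactly the intervals of the vertices of $C$, and the cliques $C_1,\ldots,C_k$ are indexed by the left-to-right order of these points) presupposes exactly the facts you establish. Your sweep argument supplies them cleanly: the Helly property of intervals on the line gives each maximal clique $C$ a point $r$ in the common intersection, the maximality argument correctly shows $S(r)=C$ (any $y$ with $r\in I_y$ meets every interval of $C$, so $C\cup\{y\}$ is a clique and $y\in C$), distinctness of representatives follows since $S(r)$ determines the clique, and consecutiveness is immediate because the cliques containing $x$ are precisely those $C_i$ with $r_i\in[l_x,r_x]$, an order-convex condition on the sorted points $r_1<\cdots<r_k$. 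It is worth noting that Gilmore and Hoffman's original route is different --- their paper proves a characterization of interval graphs (via comparability of the complement), from which the consecutive-clique ordering is extracted --- whereas your argument is the standard elementary proof from a fixed interval representation; yours is shorter and self-contained, while theirs yields the ordering as a by-product of a recognition theorem and does not presume a representation is given. One small point of care: you should fix a representation by closed intervals (or otherwise ensure pairwise intersecting intervals have a common point), since the one-dimensional Helly argument, $\max_x l_x \le \min_x r_x$, is what guarantees the representative point exists; with that stated, the proof is complete.
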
 
We call a clique of $I$, $C_i$, when its representative point $r_i$ is the $i^{th}$ smallest of the representative points of the cliques of $I$.
A vertex $v$ that appears in a clique $C_i$ is called a {\em conductor} for $C_i$ if $v$ also appears in $C_{i+1}$. Let $L(C_i)$ be the set $\{C_1,C_2,\ldots,C_i\}$ of the first $i$ cliques of $I$.  A path $P$ in $I$ is {\em spanning} for $L(C_i)$ if $P$ contains all vertices of $I$ that appear only in $L(C_i)$ and $P$ has two conductors of $C_i$ as endpoints.   Let $R_i$ be the set of representative points of the cliques containing vertex $v_i$.  A {\em point embedding} $Q$ of a path $P$, with $n$ vertices, is an assignment of a point $q(v_i)$ from $R_i$ to $v_i$ such that $q(v_i) \in R_{i+1}$ for $1 \leq i \leq n-1$. A path is {\em straight} if it has a point embedding $Q$ with the property that $q(v_i)\leq q(v_{i+1})$ for $1 \leq i \leq n-2$.  In a straight path $P$ with point embedding $Q$, $v_i$ is the active conductor between $q(v_{i-1})$ and $q(v_i)$, $2 \leq i \leq n-1$ and $v_1$ is the active conductor between the smallest point in $R_1$ and $q(v_1)$.  A path $P$, with endpoints $A$ and $B$, that spans $L(C_i)$ is said to be {\em U-shaped} if there exists a vertex $v$ in $P$ that appears only in $C_1$ such that the subpaths of $P$ from $v$ to $A$ and from $v$ to $B$ are both straight.  Such a vertex $v$ is called the base of the U-shaped path $P$. The below characterization due to Keil \cite{keil} is an important result and our result is based on this key result. 
\begin{lemma} \cite{keil}
Let $I$ be an interval graph with $m$ maximal cliques.  $I$ has a Hamiltonian cycle iff there exists a U-shaped spanning path for $L(C_i)$, $1\leq i \leq m-1$.
\end{lemma}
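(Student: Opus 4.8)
The plan is to prove both implications, with the unifying device being a canonical U-form: Hamiltonicity of $I$ will be shown equivalent to the existence of a traversal that, read off from the leftmost clique $C_1$, splits into two rightward-monotone (that is, \emph{straight}) arms meeting at a \emph{base}. The linear ordering of the maximal cliques from the Gilmore--Hoffman lemma supplies the coordinate system: for every vertex $v$ the set of representative points of cliques containing $v$ forms a contiguous block of indices, so ``moving left to right'' is well defined, and any two vertices that share a representative point are adjacent in $I$. I read the statement as asserting that, for each $i$ with $1 \le i \le m-1$, a U-shaped spanning path for $L(C_i)$ exists. Throughout I use that the vertices appearing only in $L(C_i)$ are exactly those whose clique range is contained in $\{1,\dots,i\}$, whereas every remaining vertex meets some clique $C_j$ with $j>i$.

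For the direction ($\Leftarrow$) it suffices to use the top case $i=m-1$, which already closes into a Hamiltonian cycle. Let $P$ be a U-shaped spanning path for $L(C_{m-1})$ with base in $C_1$ and endpoints $A,B$; since $A$ and $B$ are conductors of $C_{m-1}$ they both lie in $C_m$. Every vertex of $I$ not on $P$ fails to appear only in $L(C_{m-1})$, hence meets a clique of index at least $m$, that is, it lies in $C_m$. All such vertices together with $A$ and $B$ are pairwise adjacent, as they share the representative point $r_m$, so I can route a single path from $A$ through every not-yet-covered vertex of $C_m$ to $B$ and identify its ends with the endpoints of $P$. The union of $P$ and this closing path is a cycle meeting every vertex exactly once.

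For the direction ($\Rightarrow$), suppose $I$ has a Hamiltonian cycle $H$; I construct a U-shaped spanning path for $L(C_i)$ for each fixed $i \le m-1$. Let $S_i$ be the set of vertices appearing only in $L(C_i)$. First I observe that any vertex lying on $H$ adjacent to a vertex of $S_i$ but not itself in $S_i$ must belong to $C_i \cap C_{i+1}$, i.e.\ is a conductor of $C_i$: it overlaps an interval confined to the first $i$ cliques yet reaches a clique of index greater than $i$, so its range contains both $i$ and $i+1$. The aim is then to show that $H$ may be taken so that it meets $S_i$ in a single contiguous arc bounded by two conductors, realized as a U-shape. To achieve this I apply a straightening (exchange) argument to a point embedding of $H$: whenever the embedding exhibits a local dip, a descent in clique index immediately followed by an ascent, the two vertices bracketing the dip share a clique and are therefore adjacent, so I reroute $H$ across them, removing the dip while keeping it a single spanning cycle. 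A monovariant such as the total index-variation of the embedding strictly decreases at each move, so the process terminates; cutting the resulting straightened cycle at a base vertex of $C_1$ and truncating its two arms at the conductors that cross from $C_i$ into $C_{i+1}$ yields the desired U-shaped spanning path for $L(C_i)$.

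I expect the straightening step of the forward direction to be the main obstacle. The delicate points are, first, guaranteeing that the exchange deleting a dip is always legal, which rests precisely on the consecutive-cliques property that makes the bracketing vertices adjacent; and second, exhibiting a quantity (total variation of the point embedding, or the number of left-to-right reversals) that every exchange strictly reduces, so that straightening halts. A further technical burden is to verify, after straightening, that the chosen base genuinely appears only in $C_1$ and that each arm terminates at an honest conductor of $C_i$, so the truncated object satisfies both the spanning requirement and the U-shaped requirement rather than being merely a monotone subpath.
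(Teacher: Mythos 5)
The paper gives no proof of this lemma to compare against --- it imports the statement verbatim from Keil \cite{keil} --- so your attempt has to stand on its own. Your backward direction does: it is the standard closing argument. Every vertex missed by the U-shaped spanning path for $L(C_{m-1})$ fails to appear only in $L(C_{m-1})$ and hence lies in $C_m$; the endpoints $A,B$ are conductors of $C_{m-1}$ and so also lie in $C_m$; since $C_m$ is a clique, the missing vertices can be threaded from $A$ to $B$, closing the cycle. That half is correct.

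The forward direction, however, has a genuine gap: the exchange step on which your entire straightening argument rests is not always legal. You claim that at a local dip ``the two vertices bracketing the dip share a clique and are therefore adjacent.'' This is false. Take consecutive cycle vertices $u,v,w$ whose sets of representative points are $R_u=\{r_3,r_4,r_5\}$, $R_v=\{r_1,r_2,r_3\}$, $R_w=\{r_1,r_2\}$, with embedding $q(u)=r_3$, $q(v)=r_1$, $q(w)=r_2$. The point-embedding constraints $q(u)\in R_u\cap R_v$ and $q(v)\in R_v\cap R_w$ are satisfied, and $r_3>r_1<r_2$ is a dip, yet $R_u\cap R_w=\emptyset$: $u$ and $w$ are not adjacent, so the reroute edge $\{u,w\}$ you need does not exist. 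Two further holes compound this. Even where the bypass edge does exist, ``rerouting across'' $u$ and $w$ deletes $v$ from the cycle, and you never specify how $v$ is reinserted while keeping a single spanning cycle --- a generic exchange of this form can split the cycle into two. And your monovariant is only named (``total index-variation''), never verified to decrease under the full delete-plus-reinsert operation, so termination is unestablished. The per-$i$ truncation you sketch at the end is the routine part: once both arms from a base in $C_1$ are straight, an arm that passes $r_i$ never returns, so all vertices confined to $L(C_i)$ precede the crossing and the crossing vertices are honest conductors. The hard content --- that every Hamiltonian cycle of an interval graph can be normalized into such a bitonic form --- is precisely the substance of Keil's lemma, and in your write-up it remains an unproven plan rather than a proof.
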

The interval graph that we will be working with is not an arbitrary interval graph, instead, it is the graph obtained from a convex bipartite graph through the above transformation.   We shall now present a series of structural results using which we establish that the transformed interval graph has indeed a U-shaped spanning path.  As a consequence, we obtain a Hamiltonian cycle in the interval graph.   Throughout this {section,} we work with a convex bipartite graph $G$ with partitions $X$ and $Y$ and $I$ be the corresponding interval graph of $G$. 
\begin{lemma}
\label{l1}
Let $G$ be a convex bipartite graph and $I$ be the corresponding interval graph of $G$.  Then, the number of maximal cliques in $I$ is $|X|$.
\end{lemma}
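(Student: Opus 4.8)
The plan is to identify the maximal cliques of $I$ explicitly and show there are exactly $|X|=n$ of them. The guiding picture is the natural interval model of $I$: place each $x_i$ as a point at position $i$ and each $y\in Y$ as the closed integer interval $[left(y),right(y)]$; by convexity two vertices $y,y'$ are adjacent in $I$ exactly when these intervals overlap, and $x_i$ lies in the interval of $y$ exactly when $x_i\in N_G(y)$. I will however phrase the argument combinatorially rather than appeal to this model.

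First I would record that $X$ is an independent set in $I$, since the transformation adds edges only inside $Y$ and keeps the edges of $G$. Consequently every clique of $I$ contains at most one vertex of $X$. Next, for each $i\in[1..n]$ I define $C_i=\{x_i\}\cup\{y\in Y : x_i\in N_G(y)\}$, which is precisely the closed neighbourhood $N_I[x_i]$. I claim each $C_i$ is a maximal clique: it is a clique because any two $y,y'\in C_i$ share the common neighbour $x_i$ and are therefore adjacent in $I$ by the definition of $E'$; and it is maximal because every neighbour of $x_i$ in $I$ already lies in $C_i$, so no vertex can be added while preserving adjacency to $x_i$. Since $x_i\in C_i\setminus C_j$ for $i\neq j$, the cliques $C_1,\dots,C_n$ are pairwise distinct, giving at least $n$ maximal cliques.

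It then remains to prove there are no others, i.e.\ that every maximal clique of $I$ is one of the $C_i$. A maximal clique $K$ that contains some $x_i$ satisfies $K\subseteq N_I[x_i]=C_i$, and maximality forces $K=C_i$. So the only case to rule out is a maximal clique $K\subseteq Y$ consisting solely of $y$-vertices. Here I would invoke convexity: the neighbourhoods $N_G(y)$, $y\in K$, are pairwise intersecting sets of consecutive integers, so putting $p=\max_{y\in K} left(y)$ one checks that $left(y)\le p\le right(y)$ for every $y\in K$, i.e.\ $x_p$ is a common neighbour of all of $K$. Then $K\cup\{x_p\}$ is a strictly larger clique, contradicting the maximality of $K$. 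Hence every maximal clique equals some $C_i$, and the number of maximal cliques of $I$ is exactly $n=|X|$.

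The routine parts (the clique-ness and maximality of each $C_i$) are immediate from the definition of $I$; the one genuinely load-bearing step is the last paragraph, where I must extract a single $x_p$ common to an entire pairwise-overlapping family of $y$-neighbourhoods. This is a discrete Helly property for intervals and is exactly the point where the convexity (consecutiveness) of $G$ is indispensable: for an arbitrary bipartite graph, pairwise-overlapping neighbourhoods need not share a common vertex, the $Y$-only maximal cliques would reappear, and the clean count of $n$ maximal cliques would fail.
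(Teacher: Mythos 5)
Your proposal is correct, and its first half is exactly the paper's argument: the paper observes that $X$ is independent in $I$, that $N_G(x)$ is a clique in $I$ (any two of its $Y$-vertices share the neighbour $x$), and hence that each $\{x\}\cup N_G(x)$ is a maximal clique, then immediately concludes the count is $|X|$. Where you go beyond the paper is the final paragraph: the paper never explicitly rules out a maximal clique contained entirely in $Y$, which is needed to get the exact count rather than just a lower bound of $n$ maximal cliques. Your Helly-type step --- taking $p=\max_{y\in K} left(y)$ for a pairwise-overlapping family $K\subseteq Y$ and checking $left(y)\le p\le right(y)$ for all $y\in K$, so that $x_p$ extends $K$ --- is sound (pairwise overlap gives $left(y^*)\le right(y)$ for the maximizer $y^*$ against every $y\in K$), and you are right that this is precisely where convexity is indispensable; in the paper it is left implicit, presumably because the edge set $E'$ is defined by pairwise intersection of the consecutive neighbourhoods. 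You also verify pairwise distinctness of the $C_i$ ($x_i\in C_i\setminus C_j$), another point the paper takes for granted. In short: same construction, but your write-up closes the one genuine gap in the paper's own proof, at the modest cost of a slightly longer argument.
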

\begin{proof}
By our construction, for each $x \in X$, $N_G(x)$ induces a clique in $I$.  Further, $\{x\} \cup N_G(x)$ is also a clique in $I$.  Again, by the construction of $I$, no two elements of $X$ are adjacent in $I$.  Thus,  $\{x\} \cup N_G(x)$ is a maximal clique in $I$.  Therefore, the number of maximal cliques in $I$ is $|X|$. \qed
\end{proof}
\begin{corollary}
Each clique $C_i$ of $I$ contains exactly one vertex $x_i \in X$ such that, $x_i$ is the $i^{th}$ vertex in the convex ordering on $X$.  
\end{corollary}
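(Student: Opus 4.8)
The plan is to build directly on the proof of Lemma \ref{l1}. There we saw that every maximal clique of $I$ has the form $\{x\}\cup N_G(x)$ for some $x\in X$, and that no two vertices of $X$ are adjacent in $I$ (the added edges $E'$ join only pairs of $Y$-vertices, while $E(G)$ contains only $X$--$Y$ edges). Consequently any clique of $I$ contains at most one vertex of $X$, and the maximal clique $\{x\}\cup N_G(x)$ contains precisely the vertex $x$ from $X$. This already settles the ``exactly one $X$-vertex per clique'' part of the statement; the remaining task is to match the index of a clique in the representative-point ordering with the index of its $X$-vertex in the convex ordering.

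First I would fix an explicit interval representation of $I$ that is faithful to the convex ordering: assign to each $x_i$ the degenerate interval $[i,i]$ and to each $y\in Y$ the interval $[left(y),right(y)]$. Using convexity, I would verify that this family realizes exactly the adjacencies of $I$: the point $i$ lies in the interval of $y$ iff $left(y)\le i\le right(y)$, i.e.\ iff $x_i\in N_G(y)$, which recovers the $X$--$Y$ edges of $E(G)$; and two $Y$-intervals overlap iff $N_G(y)\cap N_G(y')\neq\emptyset$, which recovers $E'$. Since by Lemma \ref{l1} there are exactly $n=|X|$ maximal cliques, the representative-point ordering indexes all of them as $C_1,\ldots,C_n$.

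Next I would compute a representative point of the clique $\{x_i\}\cup N_G(x_i)$. The point $i$ lies in the interval $[i,i]$ of $x_i$, lies in no other $X$-interval $[j,j]$ with $j\neq i$, and lies in the interval of $y$ precisely when $y\in N_G(x_i)$. Hence $i$ is contained in exactly the intervals of the vertices of $\{x_i\}\cup N_G(x_i)$ and in no others, so $i$ serves as a representative point for this clique. Since the representative points $1<2<\cdots<n$ occur in increasing order in lockstep with the convex ordering $x_1,\ldots,x_n$, the clique with the $i$-th smallest representative point, namely $C_i$, is exactly $\{x_i\}\cup N_G(x_i)$, whose unique $X$-vertex is the $i$-th vertex $x_i$ of the convex ordering.

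I expect the only delicate point to be confirming that the chosen intervals are a legitimate representation of $I$ rather than of some other graph---this is where convexity is essential, since it guarantees that each $N_G(y)$ is a contiguous block $[left(y),right(y)]$ and hence that intersection of neighborhoods coincides with interval overlap. Once that is in place, the identification of representative points with convex indices is immediate and the corollary follows.
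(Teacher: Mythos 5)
Your proof is correct, and it is actually more complete than the paper's, which disposes of the corollary with the single line ``Follows from Lemma \ref{l1}.'' What genuinely follows from Lemma \ref{l1} alone is that the maximal cliques of $I$ are exactly the sets $\{x\}\cup N_G(x)$, $x \in X$, so each clique contains exactly one $X$-vertex; the alignment of the representative-point indexing $C_1,\ldots,C_n$ with the convex ordering $x_1,\ldots,x_n$ is left implicit in the paper. You supply precisely that missing step: by realizing $I$ through the canonical representation $x_i \mapsto [i,i]$ and $y \mapsto [left(y),right(y)]$ --- whose faithfulness is exactly where convexity enters, as you correctly flag --- the point $i$ is contained in the intervals of $\{x_i\}\cup N_G(x_i)$ and no others, hence serves as a representative point of that clique, and the points $1 < 2 < \cdots < n$ order the cliques in lockstep with the convex ordering. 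One small caveat worth keeping in mind: the clique ordering by representative points depends on the chosen interval representation (reversing the line reverses the ordering, matching the reversed convex ordering), so the corollary is properly a statement about this canonical representation; your argument makes that choice explicit, whereas the paper leaves it tacit.
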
 
\begin{proof}
Follows from Lemma \ref{l1}. \qed
\end{proof}
\begin{algorithm}[H]
\textit{Input:} A convex bipartite graph $G$ satisfying Property A and the corresponding interval graph $I$ of $G$. \\ \textit{Output:} A Hamiltonian cycle in $I$. 
\begin{algorithmic}[1]
\STATE Generate the ordered list of maximal cliques in $I$.  For each clique $C_i$, let $i$ be the height of $C_i$, $1 \leq i \leq n$.  For each vertex $v$, find the level[$v$] which is the height of the highest clique that contains $v$.
\STATE {Initially,} all vertices of $I$ are unlabeled.  The algorithm proceeds by visiting the cliques in increasing order of their height. After a clique $C_i$ is visited, a U-shaped spanning path $P$ for $L(C_i)$ containing only labeled vertices is found.
\STATE The endpoints of $P$ are labeled $A$ and $B$ while the interior vertices in $P$ are labeled $O$. {Any vertex,} not in $P$ remains unlabeled. As the algorithm {proceeds,} $A$ and $B$ are updated and unlabeled vertices {in path P} are changed to $O$.  When a clique $C_i$ is visited, there are three possibilities;
\IF{$i=1$}
{
	\STATE Choose two conductors $y_1$ and $y_2$ of $C_1$ such that level[$y_1$] is minimum and level[$y_2$] is {the} second minimum.  Label $y_1$ with $A$ and $y_2$ with $B$. $P(C_1)=(A,x_1,B)$ and $x_1$ is labeled $O$ ($x_1$ is the base of the U-shaped path).  
}
\ELSIF{$1<i<n$}
{
	\STATE The spanning path $P(C_i)$ is found from $P(C_{i-1})$.  Consider the endpoints $A$ and $B$ of $P(C_{i-1})$.
	\IF{level[$A$]=$i$ or level[$B$]=$i$}
	{
		\STATE Suppose level[$A$]=$i$, then the path moves from $A$ through $x_i$ to any unlabeled conductor $y$ of $C_i$ such that level[$y$] is minimum.  $A$ is relabeled $O$,  $y$ is labeled $A$ and $x_i$ is labeled  $O$.  Label of $B$ is unchanged.

	}
	\ELSIF{level[$A$]$>i$ and level[$B$]$>i$}
	{
		\STATE Choose the minimum of level[$A$] and level[$B$], say $A$.   Then update the path as follows; the path $P(C_{i-1})$ from $A$ moves through $x_i$ to any unlabeled conductor $y$ of $C_i$ such that level[$y$] is minimum. $A$ is relabeled $O$,  $y$ is labeled $A$ and $x_i$ is labeled  $O$.  Label of $B$ is unchanged.
	}
	\ENDIF
}
\ELSIF{$i=n$}
{
	\STATE Join the endpoints $A$ and $B$ of $P(C_{n-1})$ with $x_n$.
}
\ENDIF
\end{algorithmic}
\caption{Hamiltonian cycles in the transformed interval graphs}\label{hcalgo}
\end{algorithm}
\begin{lemma} 
\label{l2}
Let $G$ be a convex bipartite graph and $I$ be the corresponding interval graph of $G$.  Then, $\bigcup_{i=1}^{n} V(C_i)=V(G)$, where $V(C_i)$ denotes the vertex set of $C_i$.
\end{lemma}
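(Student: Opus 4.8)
The plan is to prove the two inclusions separately, the non-trivial direction being $V(G) \subseteq \bigcup_{i=1}^{n} V(C_i)$. First I would record that, by the transformation defining $I$, we have $V(I) = X \cup Y = V(G)$; since each $C_i$ is a set of vertices of $I$, this gives the trivial upper bound $\bigcup_{i=1}^{n} V(C_i) \subseteq V(I) = V(G)$ for free. So the whole content reduces to showing that every vertex of $G$ appears in at least one maximal clique $C_i$.

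For $X \subseteq \bigcup_{i=1}^{n} V(C_i)$, I would invoke Lemma \ref{l1} together with its Corollary: $I$ has exactly $n$ maximal cliques $C_1,\ldots,C_n$, and $C_i$ contains the vertex $x_i$, the $i$-th vertex in the convex ordering on $X$. Hence each $x_i$ lies in $C_i$, and therefore $X = \{x_1,\ldots,x_n\} \subseteq \bigcup_{i=1}^{n} V(C_i)$.

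For $Y \subseteq \bigcup_{i=1}^{n} V(C_i)$, I would reuse the explicit description of the maximal cliques established inside the proof of Lemma \ref{l1}, namely that the maximal clique containing $x_i$ is exactly $\{x_i\} \cup N_G(x_i)$. Since $G$ is connected, no $y \in Y$ is isolated, so $N_G(y) \neq \emptyset$; choosing any $x_i \in N_G(y)$ gives $y \in N_G(x_i)$, and hence $y \in \{x_i\} \cup N_G(x_i) = C_i$. Thus every $y \in Y$ lies in some $C_i$, which yields $Y \subseteq \bigcup_{i=1}^{n} V(C_i)$. Combining the two inclusions with the trivial upper bound gives $\bigcup_{i=1}^{n} V(C_i) = V(G)$.

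I do not expect a genuine obstacle here: the statement is essentially a bookkeeping consequence of Lemma \ref{l1}. The only point requiring a little care is the index matching, so that the union over $i = 1,\ldots,n$ really does sweep up all of $Y$: the clique witnessing $y \in C_i$ is the clique of the chosen neighbor $x_i \in N_G(y)$, and the Corollary guarantees this clique carries exactly the index $i$. A secondary point worth stating explicitly is that the connectivity of $G$ is precisely what rules out an isolated vertex $y$ that would otherwise fail to appear in any clique.
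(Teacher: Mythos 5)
Your proposal is correct and follows essentially the same route as the paper: both arguments rest on Lemma \ref{l1}'s identification of the maximal cliques as $\{x_i\} \cup N_G(x_i)$, placing each $x_i$ in $C_i$ and each $y \in Y$ in the clique of any neighbor $x_i \in N_G(y)$. You merely spell out two details the paper leaves implicit --- the trivial inclusion $\bigcup_{i=1}^{n} V(C_i) \subseteq V(G)$ and the role of connectedness in guaranteeing $N_G(y) \neq \emptyset$ --- which is a harmless refinement, not a different argument.
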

\begin{proof}
From Lemma \ref{l1}, we know that each $x \in X$ appears in some maximal clique in $I$.  Further, each $y \in Y$ is adjacent to some $x \in X$ and therefore $x$ and $y$ together appear in some maximal clique in $I$.  Thus, $\bigcup_{i=1}^{n} V(C_i)=V(G)$ follows. \qed
\end{proof}
\begin{lemma}
\label{x=y}
Let $G$ be a convex bipartite graph.  If $G$ satisfies Property A, then $|X|=|Y|$ and $\forall y \in Y$, $d_G(y) \geq 2$.
\end{lemma}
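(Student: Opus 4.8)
The plan is to derive both conclusions directly from the Cardinality bound (the third item of Property~A), which asserts $|N_G[X_{1..n}]| = |N'_G[X_{1..n}]| = n$; neither the Upper bound nor the Lower bound will be needed. The central observation is that at the extreme index range $p=1$, $q=n$ the two specially-defined sets collapse to familiar subsets of $Y$, and from there both claims drop out immediately.

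First I would establish $|X| = |Y|$. Since $G$ is bipartite with parts $X$ and $Y$, every $y \in Y$ satisfies $N_G(y) \subseteq X = \{x_1, \ldots, x_n\}$. Thus the defining condition of $N_G[X_{1..n}]$, namely $N_G(y) \subseteq \{x_1, \ldots, x_n\}$, is met by \emph{every} $y \in Y$, so $N_G[X_{1..n}] = Y$. The Cardinality bound then yields $|Y| = |N_G[X_{1..n}]| = n = |X|$.

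Next I would handle the degree condition. By definition $N'_G[X_{1..n}] = \{y \in Y ~|~ |N_G(y) \cap X| \geq 2\}$, and because $N_G(y) \subseteq X$ this is exactly $\{y \in Y ~|~ d_G(y) \geq 2\}$. This is a subset of $Y$ whose cardinality, again by the Cardinality bound, equals $n$; but the first step showed $|Y| = n$, so a subset of $Y$ of size $|Y|$ can only be $Y$ itself. Hence $d_G(y) \geq 2$ for every $y \in Y$.

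The argument is essentially a matter of unwinding the definitions at the full range $X_{1..n}$, so I do not expect a genuine obstacle. The only step requiring care is verifying that the bipartite structure forces $N_G[X_{1..n}]$ to equal the whole of $Y$ rather than some proper subset; this is precisely what couples the two cardinality equalities and lets the minimum-degree conclusion propagate to every vertex of $Y$.
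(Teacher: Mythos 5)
Your proof is correct and is essentially the paper's own argument: both rely solely on the Cardinality bound, identifying $N_G[X_{1..n}]$ with $Y$ and $N'_G[X_{1..n}]$ with $\{y \in Y \mid d_G(y) \geq 2\}$, then forcing the degree-one vertices out by counting. The paper phrases this via an explicit partition $Y = W \cup Z$ into degree-$1$ and degree-$\geq 2$ vertices and deduces $|W|=0$, whereas you argue that a size-$n$ subset of the size-$n$ set $Y$ must be all of $Y$ --- a purely cosmetic difference.
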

\begin{proof}
By the definition of Property A, $|N_G[X_{1..n}]|=|N'_G[X_{1..n}]|=n$. Observe that $Y=W \cup Z${, where} $W = \{y~|~y \in Y$ and $d_G(y)= 1\}$, $Z = \{y~|~y \in Y$ and $d_G(y) \geq 2\}$ and $W \cap Z = \emptyset$.  Note that by the definition, $N'_G[X_{1..n}]=Z$ and $|Z|=n$.   Similarly, $N_G[X_{1..n}]=Z \cup W$ and $|W|+|Z|=n$.  This implies that $|W|=0$.  Therefore, $Y=Z$.   Further, $|Z|=|Y|=n=|X|$. Hence, the claim follows. \qed
\end{proof}
\begin{corollary}
Let $G$ be a convex bipartite graph. If $G$ satisfies Property A, then $\forall u \in Y$, $u$ appears in at least two consecutive maximal cliques of $I$.
\end{corollary}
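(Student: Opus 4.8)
The plan is to combine the explicit description of the maximal cliques of $I$ furnished by Lemma \ref{l1} and its corollary with the degree lower bound on $Y$-vertices that Property A guarantees through Lemma \ref{x=y}. The single idea driving the proof is that, for a vertex $u \in Y$, the set of maximal cliques of $I$ containing $u$ is completely determined by the $X$-neighborhood of $u$ in $G$.

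First I would recall from the corollary of Lemma \ref{l1} that the $i$-th maximal clique of $I$ is exactly $C_i = \{x_i\} \cup N_G(x_i)$, where $x_i$ is the $i$-th vertex in the convex ordering of $X$. Since no two vertices of $X$ are adjacent in $I$ by construction, a vertex $u \in Y$ lies in $C_i$ if and only if $u \in N_G(x_i)$, equivalently $x_i \in N_G(u)$. Thus the indices $i$ with $u \in V(C_i)$ are precisely those with $x_i \in N_G(u)$.

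Next I would invoke convexity: since $G$ is convex with respect to $X$, the neighborhood $N_G(u) = \{x_{left(u)}, \ldots, x_{right(u)}\}$ is consecutive in the ordering of $X$. Because the clique ordering $C_1, \ldots, C_n$ is aligned with the convex ordering (again by the corollary of Lemma \ref{l1}), the cliques containing $u$ are exactly $C_{left(u)}, C_{left(u)+1}, \ldots, C_{right(u)}$, a consecutive block, consistent with the Gilmore--Hoffman lemma quoted above. Finally I would apply Lemma \ref{x=y}: as $G$ satisfies Property A, every $u \in Y$ has $d_G(u) \geq 2$, so $|N_G(u)| \geq 2$ and hence $right(u) \geq left(u)+1$. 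Therefore $u$ appears in the at least two consecutive maximal cliques $C_{left(u)}, \ldots, C_{right(u)}$, which is the claim.

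I do not expect a genuine obstacle here, since the statement follows almost immediately once the cliques are identified. The only point that demands care is the alignment between the ordering of cliques by their representative points and the convex ordering of $X$, because it is this alignment that lets me convert consecutiveness of $N_G(u)$ into consecutiveness of the cliques containing $u$; fortunately this is precisely what the corollary of Lemma \ref{l1} supplies, so the step is routine.
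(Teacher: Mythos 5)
Your proposal is correct and takes essentially the same route as the paper: the paper's proof likewise combines Lemma \ref{x=y} (Property A forces $d_G(y)\geq 2$) with Lemma \ref{l1} (each $x_i\in X$ lies in exactly one maximal clique $C_i$, aligned with the convex ordering) and concludes directly. The only difference is that the paper leaves the consecutiveness of the cliques containing $u$ implicit, whereas you spell it out via the convexity of $N_G(u)$ and the clique--ordering alignment from the corollary of Lemma \ref{l1}; this is a detail the paper glosses over, not a different argument.
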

\begin{proof}
From Lemma \ref{x=y}, we know that for all $y \in Y$, $d_G(y) \geq 2$.  That is, each $y$ is adjacent to at least two vertices in $X$.  From Lemma \ref{l1}, we know that each vertex in $X$ is part of exactly one maximal clique.  Therefore, the claim follows. \qed
\end{proof}
Given a convex bipartite graph satisfying Property A, we shall next present an algorithm (Algorithm \ref{hcalgo}) which {outputs} a Hamiltonian cycle.  Later, we show that convex bipartite graphs satisfying Property A always possess Hamiltonian cycles.   Our algorithm is a modified version of the algorithm presented in \cite{keil} in the context of interval graphs obtained through the above transformation.  \\\\
\noindent
\textbf{Proof of correctness of Algorithm \ref{hcalgo}:} As part of correctness, we need to show that our algorithm always {outputs} a Hamiltonian cycle.  We first show that at each iteration, our algorithm has two conductors through which extension of {the} path between iterations is guaranteed.  Secondly, we show that at each iteration, we include exactly one $x \in X$ and one $y \in Y$ in the path.\\
\begin{theorem}
\label{hci}
Let $G$ be a convex bipartite graph satisfying Property A and $I$ be the corresponding interval graph of $G$.   Algorithm \ref{hcalgo} always {outputs} a Hamiltonian cycle in $I$. 
\end{theorem}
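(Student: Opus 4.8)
The plan is to reduce the theorem to Keil's characterization (the lemma above: an interval graph has a Hamiltonian cycle iff there is a U-shaped spanning path for $L(C_i)$ for every $1 \le i \le m-1$) and to verify that Algorithm \ref{hcalgo} maintains precisely such a path throughout its run, closing it into a cycle at $C_n$. Here $m = n$ by Lemma \ref{l1}, each $C_i$ contains the unique $X$-vertex $x_i$, and Property A forces $|X| = |Y| = n$ with $d_G(y) \ge 2$ for every $y$ (Lemma \ref{x=y}), so no $Y$-vertex is pendant and every $y$ is a conductor of at least one clique. I will first record the elementary facts that, for $y \in Y$, the cliques containing $y$ are exactly $C_{left(y)}, \ldots, C_{right(y)}$ --- whence the level of $y$ equals $right(y)$ --- and that the conductors of $C_i$ are exactly the $Y$-vertices with $left(y) \le i$ and $right(y) \ge i+1$.

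The heart of the argument is an induction on $i$ proving the invariant: after $C_i$ is processed ($1 \le i \le n-1$), the labeled vertices form a U-shaped spanning path $P(C_i)$ for $L(C_i)$ with base $x_1$, whose endpoints $A$ and $B$ are conductors of $C_i$, i.e.\ their levels exceed $i$. For the base case I must produce two conductors of $C_1$; since every neighbor of $x_1$ has left-endpoint $1$ and right-endpoint at least $2$, these are precisely the $d_G(x_1)$ neighbors of $x_1$, and the Upper bound applied to the range $X_{2..n}$ gives $|N_G[X_{2..n}]| \le n-2$, i.e.\ $d_G(x_1) \ge 2$, so two conductors exist. The inductive step follows the paper's two-part plan: show that an unlabeled conductor of $C_i$ is always available (so the path extends through $x_i$ to a new endpoint) and that each iteration appends exactly one $X$-vertex and exactly one new $Y$-vertex. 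Counting then gives that $P(C_{n-1})$ contains all of $Y$ (the two chosen at $C_1$ and one per clique $C_2, \ldots, C_{n-1}$, totalling $n$) together with $x_1, \ldots, x_{n-1}$; the invariant forces the levels of $A$ and $B$ to equal $n$, so both lie in $C_n$ and are adjacent to $x_n$ in $I$, and the final step closes the path through $x_n$ into a Hamiltonian cycle on all $2n$ vertices.

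The two quantitative claims in the inductive step are where Property A is essential, and their joint verification is the main obstacle. Availability of a fresh conductor reduces to showing that the number of $Y$-vertices crossing the $C_i$--$C_{i+1}$ boundary strictly exceeds the number already labeled; this I will derive from the Lower bound, applied to suitable disjoint prefixes $X_{p_1..q_1}, \ldots$, observing that $N'_G[X_{i..i+1}]$ is exactly the set of such conductors. Preservation of the endpoint invariant requires ruling out the degenerate case in which both endpoints expire at $C_i$ (both of level $i$), since the algorithm reroutes only one endpoint per iteration; here the Upper bound $|N_G[X_{1..i}]| \le i-1$ limits the $Y$-vertices confined to $X_{1..i}$ --- equivalently those of level at most $i$ --- which must all already lie in the path by the time $C_i$ is processed. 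The genuinely delicate part is the inductive coupling of the greedy rule ``always extend through the minimum-level unlabeled conductor'' with these bounds: one must argue simultaneously that the greedy inserts every level-$\le i$ conductor into the path before it expires (so no conductor is ever stranded) and that, consequently, at most one endpoint can carry level $i$. I expect this bookkeeping to be the crux. Once it is in place, U-shapedness and straightness of the two arms follow routinely, since each arm is only ever extended to a strictly higher clique, yielding monotone point embeddings on either side of the base $x_1$.
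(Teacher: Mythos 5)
Your plan has the same skeleton as the paper's proof: reduce to Keil's characterization, maintain the invariant that after processing $C_i$ the labeled vertices form a U-shaped spanning path $P(C_i)$ for $L(C_i)$ with base $x_1$ and two conductor endpoints, establish (a) that a fresh unlabeled conductor is always available and (b) that each iteration absorbs exactly one $X$-vertex and one new $Y$-vertex, and close the path through $x_n$. These are precisely the two claims inside the paper's proof of Theorem \ref{hci}. The problem is that your write-up stops exactly where the proof has to be done: you state both quantitative claims and then defer their ``joint verification,'' declaring the ``inductive coupling of the greedy rule'' with the bounds to be a crux you expect to be delicate, without carrying it out. That deferred bookkeeping is the entire content of the theorem --- everything else (the base case via $d_G(x_1)\geq 2$, the final count of $n$ vertices of $Y$, the closing step at $C_n$) is routine --- so the proposal as written has a genuine gap rather than an omitted verification.

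It is worth noting that the paper closes this gap without the per-vertex ``no conductor is ever stranded'' induction you anticipate; both claims are short minimal-counterexample counting contradictions against Property A. For conductor availability, the paper takes the least $i$ at which no unlabeled conductor of $C_i$ exists, observes $|N'_G[X_{1..i+1}]|=i$, and --- since the cardinality bound gives $|Y|=n$ with $d_G(y)\geq 2$ for all $y$ (Lemma \ref{x=y}), so convexity confines the remaining $n-i$ vertices of $Y$ to $X_{i+1..n}$ --- derives $|N_G[X_{i+1..n}]|=n-i>n-(i+1)$, violating the \emph{Upper} bound. This matters for your specific plan: you propose to get availability from the Lower bound applied at the $C_i$--$C_{i+1}$ boundary, but that only yields $|N'_G[X_{i..i+1}]|\geq 1$ (and, on disjoint prefixes, lower bounds on unions of $N'$-sets); it cannot by itself show that the crossing conductors outnumber the already-labeled ones, since all members of $N'_G[X_{i..i+1}]$ could in principle already carry labels. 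The complementary count through the Upper bound is what actually rules this out. For the alternation claim, the paper argues: if fewer than $i-1$ vertices of $Y$ are interior at stage $i$, then $|N'_G[X_{1..i}]|<i-1$, violating the Lower bound; if more, take the least iteration $j$ absorbing two $Y$-vertices and get $|N_G[X_{1..j+1}]|\geq (j-1)+2>(j+1)-1$, violating the Upper bound. Supplying these two counting arguments (and redirecting your availability argument from the Lower to the Upper bound) would complete your proof along exactly the paper's lines.
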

\begin{proof}
The correctness of the algorithm follows from the following claims.
\begin{cl} For every iteration $i$, $1\leq i\leq n-1$, the spanning path $P(C_i)$ obtained from the algorithm has two conductors labeled with $A$ and $B$ as endpoints.    
\end{cl}
\begin{proof} The proof is by contradiction.  Since $G$ is connected, we find at least one conductor between any two adjacent cliques.  Let $i$ be the least index such that $P(C_i)$ has exactly one conductor.  This implies that $P(C_{i-1})$ has two conductors labeled with $A$ and $B$ as endpoints.   At iteration $i$, without loss of generality, {assume that the extension of the path is from $A$ and $A$ is relabeled $O$. The} label of $B$ is unchanged and is still the conductor.  Suppose, between $C_i$ and $C_{i+1}$, we do not find a conductor to be labeled $A$.   That is, $P(C_i)$ has exactly one conductor labeled $B$.  Note that until $P(C_{i-1})$, the vertices $(x_1,\ldots,x_{i-1})$ are included in the path.  Further, $x_i$ can be reached from $A$ of $P(C_{i-1})$.  However, {a} further extension to reach $x_{i+1}$ is not possible as we have exactly one conductor, i.e., there is no $y \in Y$ and $y$ is {an unlabeled} conductor which is adjacent to both $x_i$ and $x_{i+1}$.  Due to the above structural observation, we observe that {$|N'_G[X_{1..i+1}]|=i$ and this implies that $|N_G[X_{i+1..n}]| = n-i > n-(i+1)$,} contradicting the upper bound of Property A.  Therefore, we get two conductors between iterations. \qed
\end{proof}
In what follows from the above claim is {that} we obtain a $U$-shaped path in $P(C_{n-1})$ with $x_1$ as the base vertex.  Further, using $x_n$, we obtain a Hamiltonian cycle in $I$.  We observe a stronger result that the Hamiltonian cycle alternates between the vertices of $X$ and $Y$ which we shall present next.
\begin{cl} Let $Z$ = $\{y~|~y \in Y, y$ is labeled as $O$ in the spanning path $P(C_i)$ $\}$.  The spanning path $P(C_i)$, $1 \leq i \leq n-1$ is such that $|Z|$ = $i-1$.  
\end{cl}
\begin{proof}  The proof is by contradiction. 
\textbf{Case 1:} $|Z| < i-1$.  Note that $|N'_G[X_{1..i}]|< i-1$ which contradicts the lower bound of Property A. 
\textbf{Case 2:} $|Z| > i-1$. Observe that there exists an iteration $j$ in which at least two vertices of $Y$ are labeled $O$ and choose the least $j$ to work with.  {Thus we get, $|N_G[X_{1..j}]|=j-1$ and $|N_G[X_{1..j+1}]| \geq (j-1)+2 > (j+1)-1$,} which contradicts the upper bound of Property A. \qed
\end{proof}
The above two claims ensure that the path $P(C_i)$ constructed alternates between a vertex of $X$ and $Y$ and using $x_n$ we obtain a Hamiltonian cycle in $I$. \qed
\end{proof}
\begin{theorem}
\label{hci2}
Let $G$ be a convex bipartite graph satisfying Property A.  $G$ always has a Hamiltonian cycle. 
\end{theorem}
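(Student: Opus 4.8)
The plan is to chain together the results already established rather than to build anything new, since essentially all of the combinatorial work has been carried out in showing that Algorithm \ref{hcalgo} succeeds. The statement of Theorem \ref{hci2} is effectively a corollary of Theorem \ref{hci} combined with Muller's transfer theorem (the first theorem of this section), and the only point requiring genuine care is verifying the hypothesis under which that transfer theorem may be applied.

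First I would invoke Lemma \ref{x=y}: since $G$ satisfies Property A, we immediately obtain $|X|=|Y|$. This is the crucial prerequisite, because Muller's theorem relating Hamiltonian cycles in a convex bipartite graph and in its corresponding interval graph is stated only for the equicardinality case $|X|=|Y|$. Making the reduction legitimate therefore rests on this cardinality balance, and it is supplied for free by the cardinality bound clause of Property A. I would state this verification explicitly at the outset so that the subsequent appeal to Muller is unambiguous.

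Next I would apply Theorem \ref{hci}: because $G$ satisfies Property A, running Algorithm \ref{hcalgo} on the corresponding interval graph $I$ always produces a Hamiltonian cycle in $I$, so $I$ is Hamiltonian. Combining the two facts just obtained, namely $|X|=|Y|$ and the existence of a Hamiltonian cycle in $I$, Muller's theorem yields that $G$ itself has a Hamiltonian cycle, which is precisely the claim.

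I do not anticipate any real obstacle in this argument, as the difficulty has already been absorbed into the proof of Theorem \ref{hci} (the two-conductor claim and the alternation claim). The only place where one could slip is neglecting to check the $|X|=|Y|$ requirement of Muller's theorem before invoking it; routing that check through Lemma \ref{x=y} up front removes the concern and makes the proof a short, clean composition of the preceding results.
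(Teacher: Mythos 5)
Your proof is correct, but it transfers the cycle from $I$ back to $G$ by a different mechanism than the paper does. The paper never invokes Muller's equivalence theorem at this step: instead it observes that the specific Hamiltonian cycle $C$ produced by Algorithm \ref{hcalgo} alternates between $X$-vertices and $Y$-vertices (this is exactly what the second claim in the proof of Theorem \ref{hci} establishes), so $C$ contains no edge of $E'$ (no $\{y_i,y_j\}$ edge), and hence $C$ is \emph{literally} a Hamiltonian cycle in $G$ with no further argument. You instead treat Theorem \ref{hci} as a black box giving Hamiltonicity of $I$, and then apply Muller's theorem, correctly noting that its hypothesis $|X|=|Y|$ must be verified and supplying it via Lemma \ref{x=y} --- a check the paper leaves implicit (indeed the transformation itself is only defined for $|X|=|Y|$, so your explicit verification is a genuine improvement in rigor). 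The trade-off: your route is a cleaner composition of stated results but relies on the cited equivalence of \cite{muller}, whereas the paper's route is self-contained and, more usefully, hands over the algorithm's output cycle directly as a cycle of $G$; this constructive payoff is what licenses the immediately following theorem that the Hamiltonian cycle in $G$ itself can be \emph{found} in linear time, which a pure existence transfer would not deliver without unpacking Muller's proof.
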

\begin{proof}
We first obtain the corresponding interval graph $I$ of $G$ using the transformation explained in this section.   From Theorem \ref{hci}, we know that $I$ has a Hamiltonian cycle.  From the construction of Hamiltonian cycle $C$ in $I$, it is clear that the cycle $C$ does not contain edges $\{y_i,y_j\}$ such that $\{y_i,y_j\} \in E'$.   Thus, $C$ is also a Hamiltonian cycle in $G$. \qed
\end{proof}
{\bf Run-time Analysis:} Given a convex bipartite graph $G$ satisfying Property A, if $n=|X|$, then $n-1 \leq |Y| \leq n+1$.  Thus, the transformed interval graph $I$ is such that $|V(I)|=O(n)$ and $m=|E(I)|$.  We shall now analyze the complexity of Algorithm \ref{hcalgo}. It is known from {Booth and leuker} \cite{golu} that maximal clique ordering (MCO) of interval graphs can be obtained in $O(n+m)$.   Using MCO, the height and level {information} can be obtained in $O(n+m)$ time.  Steps 4-12 of the algorithm essentially identifies the appropriate $y \in Y$ that can connect two consecutive $X$ vertices.   For a vertex $x_i \in X$, among unlabeled $Y$ vertices, the vertex whose degree is minimum is chosen.  Since the cost of this task is at most the degree of $x_i$ and the sum of the degrees is $O(m)$, the overall time complexity is $O(n+m)$ which is linear in the input size. \\
\begin{theorem}
Let $G$ be a convex bipartite graph satisfying Property A.  The Hamiltonian cycle in $G$ can be found in linear time.
\end{theorem}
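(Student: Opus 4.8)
The plan is to establish linearity by combining the correctness results already proved (Theorem \ref{hci} and Theorem \ref{hci2}) with a careful accounting of the cost of each phase of Algorithm \ref{hcalgo}, measured against the size of the input graph $G$, namely $O(n+|E(G)|)$. First I would pin down the size of the objects we manipulate: since $G$ satisfies Property A, Lemma \ref{x=y} gives $|X|=|Y|=n$, so $|V(I)|=2n=O(n)$, and every $y\in Y$ has $d_G(y)\ge 2$. Crucially, I would argue that we never need to materialise the edge set $E'$ of $I$ explicitly (which could be quadratic); instead, by Lemma \ref{l1} the maximal cliques are exactly $C_i=\{x_i\}\cup N_G(x_i)$, and each $y$ with $N_G(y)=\{x_{left(y)},\ldots,x_{right(y)}\}$ participates precisely in the consecutive cliques $C_{left(y)},\ldots,C_{right(y)}$. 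Thus an interval representation, the ordered clique list, the height of each clique, and $level[v]$ (equal to $right(y)$ for $y\in Y$ and to $i$ for $x_i$) are all read off directly from the adjacency lists of $G$ in $O(n+|E(G)|)$ time, reusing the $O(n+m)$ maximal-clique-ordering machinery of \cite{golu}.

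Second, I would bound the main loop (Steps 4--12), which visits the $n$ cliques in order. At iteration $i$ the only non-trivial work is to locate, among the still-unlabeled conductors of $C_i$, one of minimum level through which to route the path from the chosen endpoint to $x_i$. Since the conductors of $C_i$ are contained in $N_G(x_i)$, this search and the associated relabelling cost $O(d_G(x_i))$, and summing over all cliques gives $\sum_{i=1}^{n} d_G(x_i)=|E(G)|$. Hence the loop runs in $O(n+|E(G)|)$, and the closing step for $C_n$ (joining the two endpoints through $x_n$) is $O(1)$.

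Finally, I would invoke correctness: Theorem \ref{hci} guarantees that the path produced after visiting $C_{n-1}$ is a $U$-shaped spanning path, so the closing step yields a Hamiltonian cycle in $I$; and Theorem \ref{hci2} shows this cycle uses no edge of $E'$, hence is a Hamiltonian cycle of $G$ itself. Combining the linear preprocessing with the linear main loop gives total time $O(n+|E(G)|)$, i.e.\ linear in the input size.

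The step I expect to be the main obstacle is the data-structure accounting in the second paragraph: the clean claim that the cost at iteration $i$ is $O(d_G(x_i))$ is only valid if retrieving the minimum-level unlabeled conductor and maintaining the labels $A$, $B$, $O$ stay within the degree budget without a hidden logarithmic factor. I would therefore be careful to explain that, because $level[y]$ for $y\in C_i$ is simply $right(y)$ and can be compared in $O(1)$, a single scan of $N_G(x_i)$ suffices (no priority queue is needed), and that the labeled/unlabeled status is maintained with a Boolean array indexed by vertex, so each vertex is touched a number of times bounded by its degree and the amortised bound indeed collapses to $O(n+|E(G)|)$.
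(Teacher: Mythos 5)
Your proposal is correct, and its correctness core is identical to the paper's: the paper proves this theorem by simply combining Theorem \ref{hci2} with its preceding run-time discussion, which, like you, charges the search for a minimum-level unlabeled conductor at clique $C_i$ to the degree of $x_i$ and sums degrees over the main loop of Algorithm \ref{hcalgo}. Where you genuinely diverge is in the accounting of the preprocessing, and your version is the more careful one. The paper builds the interval graph $I$ and invokes the Booth--Lueker maximal-clique-ordering machinery of \cite{golu}, stating a bound of $O(n+m)$ with $m=|E(I)|$; but $E'$ can be superlinear in $|E(G)|$ (e.g., when many $Y$-vertices share a common neighbor, $|E'|=\Theta(n^2)$ while $|E(G)|=O(n)$), so ``linear'' there really means linear in the size of the \emph{transformed} graph, not of the input $G$. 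You close exactly this gap by never materializing $E'$: using Lemma \ref{l1} you read the clique ordering directly as $C_i=\{x_i\}\cup N_G(x_i)$, observe that $y$ lies in the consecutive cliques $C_{left(y)},\ldots,C_{right(y)}$ so that $level[y]=right(y)$ comes for free from the adjacency lists, and replace any priority structure by a single scan of $N_G(x_i)$ with a Boolean labeled/unlabeled array. This yields a clean $O(n+|E(G)|)$ bound in the size of the original convex bipartite graph, which is a strictly stronger (and, strictly speaking, the intended) statement than what the paper's discussion literally establishes; the paper's route buys brevity and reuse of \cite{golu}, while yours buys a correct input-sensitive bound. One small note: the paper's analysis says the $y$ of minimum \emph{degree} is chosen at $x_i$, whereas Algorithm \ref{hcalgo} (and you) select by minimum \emph{level}; your reading matches the algorithm as stated, so no change is needed on your side.
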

\begin{proof}
Follows from Theorem \ref{hci2} and the above discussion. \qed
\end{proof}
We next present a necessary and sufficient condition for $G$ to have a Hamiltonian cycle, an important contribution of this section.
\begin{theorem} \label{hcyclechar}
Let $G$ be a convex bipartite graph. $G$ has a Hamiltonian cycle iff $G$ satisfies Property A.
\end{theorem}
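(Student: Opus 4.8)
The theorem asserts an iff between the existence of a Hamiltonian cycle in $G$ and $G$ satisfying {\tt Property A}. The sufficiency direction ($\Leftarrow$) is already in hand: Theorem \ref{hci2} shows that any convex bipartite graph satisfying {\tt Property A} has a Hamiltonian cycle, so that half of the equivalence requires no new argument and I would simply cite it. The entire content of the proof therefore lies in the necessity direction ($\Rightarrow$): assuming $G$ has a Hamiltonian cycle, I must establish that all three clauses of {\tt Property A} (the upper bound, the lower bound, and the cardinality bound) hold.

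The plan is to exploit the fact that a Hamiltonian cycle $C$ in a bipartite graph $G(X,Y)$ must alternate strictly between $X$ and $Y$; consequently $|X|=|Y|=n$, and every vertex (in particular every $y\in Y$) has degree at least $2$ in $G$. This immediately yields the cardinality bound: since no $y$ is pendant, $W=\emptyset$ in the notation of Lemma \ref{x=y}, so $N_G[X_{1..n}]=N'_G[X_{1..n}]=Y$ and $|N_G[X_{1..n}]|=|N'_G[X_{1..n}]|=n$. For the upper bound, I would argue by contradiction: suppose some interior set $X_{p..q}$ (with $1=p<q<n$, or $1<p<q=n$, or $1<p<q<n$) satisfies $|N_G[X_{p..q}]|\ge q-p+1$. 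The vertices counted by $N_G[X_{p..q}]$ are those $y\in Y$ whose entire neighborhood lies inside $\{x_p,\ldots,x_q\}$; in the alternating cycle $C$ these $y$'s must be matched to the $q-p+1$ vertices $x_p,\ldots,x_q$, and by convexity the cycle cannot route through this block using any outside $Y$-vertex, forcing a short sub-cycle and contradicting that $C$ is a single Hamiltonian cycle. This is essentially a counting/pigeonhole obstruction: too many $Y$-vertices trapped inside a block of $X$ prevents the cycle from entering and leaving that block the requisite two times.

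For the lower bound, I would again use the alternating structure together with convexity. The quantity $N'_G[X_{p_i..q_i}]$ counts $Y$-vertices having at least two neighbors inside the block $X_{p_i..q_i}$, i.e.\ those capable of contributing an edge of $C$ that stays within the block; the claim is that across a disjoint collection of blocks these must number at least the stated sum. The idea is that a Hamiltonian cycle restricted to each block $X_{p_i..q_i}$ must supply enough internal $Y$-vertices to interleave with the $q_i-p_i+1$ consecutive $X$-vertices, and the disjointness condition $q_i\le p_{i+1}$ lets me add these contributions without double counting. If the union of the $N'_G$ sets were too small, there would not be enough usable $Y$-vertices to alternate with all the $X$-vertices in the chosen blocks, again breaking the cycle.

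I expect the main obstacle to be the lower bound. The upper bound and cardinality clauses reduce cleanly to the alternation of $C$ and a localized pigeonhole count, but the lower bound is a simultaneous inequality over an arbitrary family of $j$ nested-disjoint blocks, so I would need a careful accounting argument — most likely a Hall-type or matching argument — showing that the edges of $C$ lying inside each block force enough distinct members into $\bigcup_i N'_G[X_{p_i..q_i}]$. Making the disjointness bookkeeping rigorous, and ruling out the case where a single $Y$-vertex is ``shared'' across block boundaries via convexity, is where the delicate work will be.
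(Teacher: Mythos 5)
Your proposal follows essentially the same route as the paper: sufficiency is discharged by citing Theorem \ref{hci2}, and necessity is argued by contradiction, clause by clause, from the forced alternation of a Hamiltonian cycle between $X$ and $Y$ --- matching the paper's cases (upper bound via the pigeonhole that $C$ contains at most $q-p$ vertices of $N_G[X_{p..q}]$, cardinality bound via $|X|=|Y|$ together with the absence of pendant $Y$-vertices). The one step where you anticipate heavier machinery is in fact not needed: for the lower bound the paper uses precisely the direct count you sketch --- if $|\bigcup_{i=1}^{j} N'_G[X_{p_i..q_i}]| = \sum_{i=1}^{j}(q_i-p_i)-k$ with $k>0$, then a cycle through all of this set meets at most that many vertices of $\bigcup_{i=1}^{j} X_{p_i..q_i}$, while $|\bigcup_{i=1}^{j} X_{p_i..q_i}| = \sum_{i=1}^{j}(q_i-p_i)+j$, a contradiction obtained without any Hall-type matching argument.
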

\begin{proof}
\textbf{Necessary:} Suppose, $G$ has a Hamiltonian cycle $C$ and does not satisfy Property A.\\
\textbf{Case 1:} There exists a set $X_{p..q}$, such that $|N_G[X_{p..q}]| > q-p$, $1=p<q<n$ or $1<p<q=n$ or $1<p<q<n$. Then, 
Case 1.1: $C$ contains the set $X_{p..q}$ in order, and the set $Y$, alternately. Then, $C$ includes at most $q-p$ vertices that belong to  $N_G[X_{p..q}]$. Therefore, no Hamiltonian cycle exists in $G$. 
Case 1.2: $C$ contains the set $X_{p..q}$ out of order, and $Y$, alternately. Then, $C$ includes at most $q-p$ vertices that belong to  $N_G[X_{p..q}]$.  Therefore, $C$ is not a Hamiltonian cycle. \\
{
\textbf{Case 2:} There exists $j$ number of sets $X_{p_i..q_i}$, $1\leq i \leq j$ where $j\in [1..n-1]$ such that 
\begin{equation*}
	|\bigcup_{i=1}^j N'_G[X_{p_i..q_i}]|< (\sum_{i=1}^{j-1} q_i-p_i)+ q_j-p_j, 
	\begin{cases} & \text{if $j>1$, $ 1\leq p_i<q_i\leq p_{i+1}<q_{i+1}\leq n $}\\ & \text{if $j=1$, $1\leq p_j<q_j\leq n.$}\end{cases}
\end{equation*}
Let the set $Z=\bigcup_{i=1}^j N'_G[X_{p_i..q_i}]$ and $|Z|=(\sum_{i=1}^{j} q_i-p_i)-k$ where $k>0$. We observe that $C$ containing all the vertices in the set $Z$ can include at most $|Z|$ vertices from the set $\bigcup_{i=1}^j X_{p_i..q_i}$. However, $|\bigcup_{i=1}^j X_{p_i..q_i}|=(\sum_{i=1}^{j} q_i-p_i)+j>|Z|$, a contradiction. \\
\textbf{Case 3:} $|N_G[X_{1..n}]|\neq n$. We observe that $N_G[X_{1..n}]=Y=W \cup Z$, where $W = \{y~|~y \in Y$ and $d_G(y)= 1\}$, $Z = \{y~|~y \in Y$ and $d_G(y) \geq 2\}$ and $W \cap Z = \emptyset$. Since for a Hamiltonian cycle to exists, the graph must be 2-connected. This implies $|W|=0$. Thus, $N_G[X_{1..n}]=Y=Z$. For bipartite graphs to have a Hamiltonian cycle,  $|X|=|Y|$. This implies $C$ cannot exist with more than or less than $n$ vertices of $Y$.  \\
\textbf{Case 4:} $|N'_G[X_{1..n}]|\neq n$.  Clearly, no Hamiltonian cycle exists with less than or more than $n$ vertices from $Y$ since by definition, $|N'_G[X_{1..n}]|=Z=Y$, where $Z = \{y~|~y \in Y$ and $d_G(y) \geq 2\}$. Since we arrive at a contradiction in all cases, the necessary condition follows. \\
}
\textbf{Sufficiency:} Sufficiency follows from Theorem \ref{hci2}.   \qed
\end{proof}
{\bf Trace of Algorithm \ref{hcalgo}:}
We shall trace our algorithm with respect to the illustration given in Figure \ref{tracehcycle}.  A convex bipartite graph and its corresponding interval graph are shown.
\begin{figure}
\begin{center}
\includegraphics[scale=0.7]{Hamiltonian_cycle.png}
\caption{An example to trace Hamiltonian cycle algorithm}\label{tracehcycle}
\end{center}
\end{figure}
\begin{itemize}
\item Computation of height and level {information}.  For the illustration, $C_1=\{1,a,c\}, C_2=\{2,a,c,b\}, C_3=\{3,a,b,c,d\}, C_4=\{4,c,d,e\}, C_5=\{5,d,e,f\}, C_6=\{6,d,e,f\}$.   Note that the height of $C_i$ is $i$.  Level {information is} as follows; $level[a]=3, level[b]=3, level[c]=4, level[d]=6, level[e]=6, level[f]=6$.
\item At iteration 1, i.e., $i=1$, the vertex $1$ is the base of the $U$-shaped path and the endpoints are $a$ and $c$.  Label $a$ as $A$ and $c$ as $B$.   Further, the vertex $1$ is labeled $O$.  The path constructed is $P(C_1)=(a,1,c)$.
\item $i=2$;  Using $P(C_1)$ and labels $A$ and $B$, we shall now extend the path. Since $level[A] \not= 2$ or $level[B] \not= 2$ and $level[A] > 2$ and $level [B] > 2$, we find $\min(level[A],level[B])=min(3,4)=3$.  This means, we extend $A$ by adding a $X$ vertex and a $Y$ vertex.  Since the unlabeled conductor of $C_2$ is $b$, we obtain $P(C_2)=(b,2,a,1,c)$.  Label $b$ as $A$ and label $a$ and {2} as $O$
\item $i=3$; Since $level[A]=i=3$, extend the path from $A$.  We obtain $(d,3,b,2,a,1,c)$.  Label $b$ and $3$ as $O$ and $d$ as $A$.
\item $i=4$; Since $level[B]=i=4$, extend the path from $B$.  We obtain $(d,3,b,2,a,1,c,4,e)$.  Label $c$ and {4} as $O$ and $e$ as $B$
\item $i=5$; Since $level[A]>i$ and $level[B]>i$, find $\min(6,6)$.  Choose either $A$ or $B$ to extend the path to get $(d,3,b,2,a,1,c,4,e,5,f)$.  Label $e$ and {5} as $O$ and label $f$ as $B$.
\item $i=6$; join $x_n$ to complete the cycle.  $(6,d,3,b,2,a,1,c,4,e,5,f,6)$
\end{itemize}
We next present an important result which says that {Chvatal's} necessary condition is indeed sufficient for 2-connected convex bipartite graphs. 
\begin{theorem}
Let $G$ be a 2-connected convex bipartite graph. $G$ has a Hamiltonian cycle if and only if for every $S \subseteq V(G)$, $c(G-S) \leq |S|$, where $c(G-S)$ denotes the number of connected components in $G-S$. 
\end{theorem}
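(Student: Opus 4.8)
The plan is to prove the two directions separately. For the \emph{necessary} direction I would invoke the classical fact that every graph with a Hamiltonian cycle is $1$-tough: if $C$ is a Hamiltonian cycle of $G$ and $S\subseteq V(G)$ is a nonempty cutset, then deleting the $|S|$ vertices of $S$ from the cycle $C$ breaks it into at most $|S|$ arcs, so $c(C-S)\le|S|$; since $G-S$ contains $C-S$ on the same vertex set, $c(G-S)\le c(C-S)\le|S|$. (As usual we only test nonempty $S$.) The content of the theorem is therefore the converse.

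For the \emph{sufficient} direction I would not argue directly but reduce to the characterization already established: by Theorem~\ref{hcyclechar}, $G$ has a Hamiltonian cycle iff it satisfies {\tt Property A}, and by Theorem~\ref{hci2} {\tt Property A} already suffices to build the cycle. So it is enough to show that a $2$-connected convex bipartite graph with $c(G-S)\le|S|$ for all $S$ must satisfy {\tt Property A}, and I would prove the contrapositive: if {\tt Property A} fails, I exhibit a set $S$ with $c(G-S)>|S|$. First note that $2$-connectivity forces $d_G(y)\ge 2$ for every $y\in Y$, so the degree-one set $W$ is empty and $N_G[X_{1..n}]=N'_G[X_{1..n}]=Y$.

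I would then split into the three ways {\tt Property A} can fail. If the \emph{cardinality bound} fails, then $|Y|\ne|X|=n$, and taking $S$ to be the smaller of $X,Y$ leaves the larger side as more than $|S|$ isolated vertices, so $c(G-S)>|S|$. If the \emph{upper bound} fails for a window $X_{p..q}$ (which by the index constraints is always a proper subinterval of $X$), I would take $S=\{x_p,\dots,x_q\}$: every $y\in N_G[X_{p..q}]$ has all its neighbours in $S$ and becomes isolated, giving $|N_G[X_{p..q}]|\ge q-p+1$ components, while the nonempty part of $X$ outside the window contributes at least one more, so $c(G-S)\ge(q-p+1)+1>q-p+1=|S|$. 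The substantive case is the \emph{lower bound}. Given ordered windows $X_{p_1..q_1},\dots,X_{p_j..q_j}$ with $|\bigcup_i N'_G[X_{p_i..q_i}]|<\sum_i(q_i-p_i)$, I would take $S=Z:=\bigcup_i N'_G[X_{p_i..q_i}]$ and establish two structural facts from convexity together with $\delta(G)\ge 2$: (i) every \emph{interior} vertex $x_s$ of a window ($p_i<s<q_i$) becomes isolated in $G-Z$, since each of its $Y$-neighbours has a consecutive neighbourhood of size $\ge 2$ containing $s$, hence containing $x_{s-1}$ or $x_{s+1}$, hence lies in $N'_G[X_{p_i..q_i}]\subseteq Z$; and (ii) no surviving $Y$-vertex can span an entire window, so each window acts as a \emph{barrier} splitting the line $x_1,\dots,x_n$ into $j+1$ nonempty connected zones. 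Counting then gives $c(G-Z)\ge\big(\sum_i(q_i-p_i)-j\big)+(j+1)=\sum_i(q_i-p_i)+1>|Z|=|S|$, the required violation.

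The step I expect to be the main obstacle is proving (ii) cleanly and controlling the degenerate window configurations — adjacent windows sharing an endpoint ($q_i=p_{i+1}$), width-two windows with empty interior, and windows touching $x_1$ or $x_n$ — so that the zone count stays valid; reassuringly, each such degeneracy only creates extra isolated vertices and so strengthens the inequality. I would therefore isolate facts (i) and (ii) as a single lemma stated purely in terms of the convex ordering, after which the three cases reduce to routine counting. Once all three cases yield $c(G-S)>|S|$, the contrapositive gives $1$-toughness $\Rightarrow$ {\tt Property A}, and Theorems~\ref{hcyclechar} and~\ref{hci2} close the sufficiency. As an alternative that avoids re-deriving the counting, one may pass to the corresponding interval graph $I$: since $V(I)=V(G)$ and $E(I)\supseteq E(G)$, adding edges cannot increase $c(\cdot-S)$, so $I$ inherits $1$-toughness from $G$; invoking the known equivalence that an interval graph is Hamiltonian iff it is $1$-tough, $I$ is Hamiltonian, and Theorem~1 transfers the cycle back to $G$.
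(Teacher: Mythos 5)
Your proposal is correct and takes essentially the same route as the paper: necessity via the standard Chv\'atal/1-toughness argument, and sufficiency by contrapositive through Theorem~\ref{hcyclechar}, using the same separating sets in the three failure cases of {\tt Property A} (namely $S=X_{p..q}$ for the upper bound, $S=\bigcup_{i} N'_G[X_{p_i..q_i}]$ for the lower bound, and $S=X$ or $S=Y$ for the cardinality bound, with $2$-connectivity ruling out pendant $Y$-vertices exactly as in the paper's Case~3). If anything, your facts (i) and (ii) explicitly justify the component count $c(G-S)\geq \bigl(\sum_{i}(q_i-p_i)\bigr)+1$ in the lower-bound case, which the paper asserts without detail.
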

\begin{proof}
We shall prove the sufficiency as the {proof} of necessary part follows from Chvatal's result.  The proof is by contradiction.  If, on the contrary, assume that $G$ does not have a Hamiltonian cycle.  By Theorem \ref{hcyclechar}, $G$ does not satisfy Property A.  The following case by case analysis completes the proof. \\
\textbf{Case 1:} $G$ violates the upper bound of Property A.  This implies that there exists a set $X_{p..q}$ for which 
$|N_G[X_{p..q}]| > q-p$, where $1=p<q<n$ or $1<p<q=n$ or $1<p<q<n$.  If we consider $S=X_{p..q}$, then $c(G-X_{p..q}) \geq |X_{p..q}|+1$, contradiction to the premise that $c(G-S) \leq |S|$.\\
\textbf{Case 2:} $G$ violates the lower bound of Property A.  There exists $j$ number of sets $X_{p_i..q_i}$, $1 \leq i \leq j$ where $j\in [1..n-1]$ such that 
\begin{equation*}
	|\bigcup_{i=1}^j N'_G[X_{p_i..q_i}]|< (\sum_{i=1}^{j-1} q_i-p_i)+ q_j-p_j, 
	\begin{cases} & \text{if $j>1$, $ 1\leq p_i<q_i\leq p_{i+1}<q_{i+1}\leq n $}\\ & \text{if $j=1$, $1\leq p_j<q_j\leq n.$}\end{cases}
\end{equation*} 
Then, $c(G-\bigcup_{i=1}^j N'_G[X_{p_i..q_i}])\geq (\sum_{i=1}^{j} q_i-p_i)+1>|\bigcup_{i=1}^j N'_G[X_{p_i..q_i}]|$.  A contradiction to the premise.\\
{\bf Case 3:}
By definition, it is clear that for $X_{1..n}$, $|N_G[X_{1..n}]| \geq |N'_G[X_{1..n}]|$.   Note that if $|N_G[X_{1..n}]| > |N'_G[X_{1..n}|$, then there exists a pendant vertex in $G$, however, we know that $G$ is 2-connected.  Therefore, $|N_G[X_{1..n}]|=|N'_G[X_{1..n}]|$.\\
\textbf{Case 3a:} $|N_G[X_{1..n}]|=|N'_G[X_{1..n}]|< n$. Then, $c(G-N_G[X_{1..n}])=n>|N_G[X_{1..n}]|$.\\
\textbf{Case 3b:} $|N_G[X_{1..n}]|=|N'_G[X_{1..n}]|> n$. Then, $c(G-X_{1..n})=|N_G[X_{1..n}]|>n$. \\
Since, we arrive at a contradiction in all of the above, our assumption that $G$ has no Hamiltonian cycle is false. \qed
\end{proof}
{\bf Remark:} If the above theorem is used for {the} algorithmic purpose, then test for Hamiltonicity requires exponential time in the input size. 
\section{Hamiltonian paths in convex bipartite graphs} \label{hpath}
In this section, we shall present some structural results and a polynomial-time algorithm to find Hamiltonian paths in {monotone} convex bipartite graphs.   
\begin{definition} Let $G$ be a convex bipartite graph. $G$ is said to satisfy {\tt Property B} if it satisfies the following properties 
\begin{enumerate}
\item Upper bound: \\
{For any set} $X_{p..q}$,\\
\begin{equation*}
  |N_G[X_{p..q}]|\leq
  \begin{cases}
    q-p+1, & \text{if $1=p<q<n$ or $1<p<q=n$ or $1<p<q<n$}.\\
    q-p+2, & \text{if $p=1$ and $q=n$}.
  \end{cases}
\end{equation*}
\item Lower bound: \\
For any integer $j\in [1..n-1]$,\\
\begin{equation*}
	|\bigcup_{i=1}^j N'_G[X_{p_i..q_i}]| \geq (\sum_{i=1}^{j-1} q_i-p_i)+ q_j-p_j, 
	\begin{cases} & \text{if $j>1$, $ 1\leq p_i<q_i \leq p_{i+1}<q_{i+1}\leq n$}\\ & \text{if $j=1$, $1\leq p_j<q_j\leq n$ }\end{cases}
\end{equation*}
\item Pendant bound: \\
The number of pendants is at most 2 and for $n \geq 2$, there is no $x \in X$ having two pendant $Y$ vertices incident on it.\\
\end{enumerate}
\end{definition}
To present algorithmic results, we partition convex bipartite graphs satisfying Property B into two sets, namely {\tt monotone} and {\tt non-monotone}, and their definitions are as follows;
\begin{definition} Let $G$ be a convex bipartite graph satisfying Property B.  $G$ is monotone if it satisfies the following properties;
\begin{enumerate}
    \item There does not exist a pendant vertex $y \in Y$ such that $y$ is adjacent to $x_k \in X$, $1<k<n$.
    \item There does not exist a {\em maximal} $X_{p..q}$, $1<p<q<n$.
\end{enumerate}
\end{definition}\label{monoprop}
\begin{definition}  Let $G$ be a convex bipartite graph satisfying Property B.  $G$ is non-monotone if it satisfies at least one of the following properties;  
\begin{enumerate}
    \item There exists a pendant vertex $y \in Y$ such that $y$ is adjacent to $x_k \in X$, $1<k<n$.
    \item There exists a {\em maximal} $X_{p..q}$, $1<p<q<n$.
\end{enumerate}
\end{definition}
We next consider the Hamiltonian path problem and show that convex bipartite graphs with Hamiltonian paths satisfy Property B.
\begin{algorithm}[H]
\textit{Input:} A Monotone convex bipartite graph $G$.\\
\textit{Output:} A Hamiltonian path $P$ in $G$.
\begin{algorithmic}[1]
\STATE {Initially,} all vertices are unmarked and set the path $P=\emptyset$.   The algorithm proceeds by visiting $|X|$ and $|Y|$ vertices alternately. When a vertex $u$ is visited, we mark $u$ and include in $P$.
\IF{$|Y|=|X|-1$} \STATE{\tt /* this case is referred to as $X-X$ path */ }

	\FOR{$i=1$ to $|X|-1$}
	
		\STATE Choose an unmarked vertex $y \in N_G(x_i)$ such that $right(y)$ is minimum.  Mark $x_i$ and $y$. Update $P=(P,x_i,y)$
	
	\ENDFOR
	\STATE Mark $x_n$ and update $P=(P,x_n)$.

\ELSIF{$|Y|=|X|$}  \STATE{\tt /* this case is referred to as $X-Y$ path or $Y-X$ path */ }
\STATE{In this {case,} we get $X-Y$ path or $Y-X$ path.  In the next for loop, we explore $X-Y$ path and if $P$ cannot be extended at some stage, then the subsequent steps {explore} $Y$-$X$ {path.}}
	\FOR{$i=1$ to $|X|$}
	
\STATE Choose an unmarked vertex $y \in N_G(x_i)$ such that $right(y)$ is minimum.  Mark $x_i$ and $y$. Update $P=(P,x_i,y)$
\STATE Break if $P$ cannot be extended or there exists an unmarked vertex $y \in Y$.  Skip the next step if {the} $X-Y$ path is found.
	\ENDFOR

\STATE Choose an unmarked vertex $y \in N_G(x_1)$ such that $right(y)$ is minimum.  Mark $y$.  Update $P=(P,y,x_1)$.  Find $X-X$ path from $x_1$ in the graph induced on $V(G) \setminus \{y\}$. Update $P$ by augmenting the $X-X$ path.


\ELSIF{$|Y|=|X|+1$}
\STATE{\tt /* this case is referred to as $Y-Y$ path */}
	\STATE Choose an unmarked vertex $y \in N_G(x_1)$ such that $right(y)$ is minimum.   Mark $y$.  Update $P=(P,y,x_1)$.  Find $X-Y$ path from $x_1$ in the graph induced on $V(G) \setminus \{y\}$. Update $P$ by augmenting {the} $X-Y$ path.


\ENDIF
\STATE Output $P$ as the Hamiltonian path in $G$.
\end{algorithmic}
\caption{Hamiltonian paths in monotone convex bipartite graphs}\label{monoalgo}
\end{algorithm}

\begin{lemma}
Let $G$ be a convex bipartite graph. If $G$ has a Hamiltonian path then $G$ satisfies Property B.
\end{lemma}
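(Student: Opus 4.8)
The plan is to derive all three conditions of Property B from a single, standard necessary condition for Hamiltonian paths: if a connected graph $H$ has a Hamiltonian path, then for every $S \subseteq V(H)$ we have $c(H-S) \le |S|+1$, because deleting $S$ breaks the spanning path into at most $|S|+1$ subpaths, each lying inside one connected component of $H-S$. I would first record this observation, then assume $G$ has a Hamiltonian path and verify each of the three parts by choosing a suitable $S$ and bounding $c(G-S)$ from below using the convexity of $G$.

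For the upper bound I would take $S = X_{p..q}$, so $|S| = q-p+1$. Every $y \in N_G[X_{p..q}]$ satisfies $N_G(y) \subseteq S$, hence becomes an isolated vertex of $G-S$, contributing $|N_G[X_{p..q}]|$ components. In the three interior/boundary cases the vertex set outside $S$ is nonempty (it contains $x_1$ or $x_n$) and adds at least one further component, so $c(G-S) \ge |N_G[X_{p..q}]| + 1$; combined with $c(G-S) \le |S|+1$ this yields $|N_G[X_{p..q}]| \le q-p+1$. When $p=1$ and $q=n$ we have $S = X$ and $N_G[X_{1..n}] = Y$, so every $Y$-vertex is isolated, giving $c(G-X) = |N_G[X_{1..n}]| \le n+1 = q-p+2$.

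For the lower bound I would fix disjoint ordered intervals and set $S = Z = \bigcup_{i=1}^{j} N'_G[X_{p_i..q_i}]$, then establish two convexity-driven structural facts. First, every interior vertex $x_k$ with $p_i < k < q_i$ becomes isolated (up to attached pendants) in $G-Z$: a surviving neighbor $y \notin Z$ of $x_k$ has at most one neighbor in $X_{p_i..q_i}$, but an $X$-interval containing index $k$ and of size at least two contains $x_{k-1}$ or $x_{k+1}$, both inside $[p_i,q_i]$, forcing $y \in N'_G[X_{p_i..q_i}] \subseteq Z$; hence $y$ is pendant to $x_k$. This produces $\sum_{i=1}^{j}(q_i-p_i-1)$ distinct components. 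Second, each interval is a barrier: a surviving $y$ adjacent to some vertex of index $\le p_i$ and some vertex of index $\ge q_i$ would, by convexity, be adjacent to all of $x_{p_i},\ldots,x_{q_i}$ and thus lie in $Z$; so the $j$ intervals cut the remaining vertices into the $j+1$ regions before, between, and after them, each nonempty because it contains a boundary vertex $x_{p_i}$ or $x_{q_i}$. Combining these gives $c(G-Z) \ge \sum_{i=1}^{j}(q_i-p_i-1) + (j+1) = \sum_{i=1}^{j}(q_i-p_i)+1$, and $c(G-Z) \le |Z|+1$ yields $|Z| \ge \sum_{i=1}^{j}(q_i-p_i)$, exactly the lower bound. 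I expect this part to be the main obstacle: the toughness inequality by itself is too weak, and pinning down the exact count $\sum_{i}(q_i-p_i)+1$ depends entirely on the two structural claims above, so these must be stated and justified carefully from the interval property.

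For the pendant bound I would argue directly from the path structure. A pendant vertex has degree one, so in any Hamiltonian path it can only be an endpoint; since a path has exactly two endpoints, $G$ has at most two pendants. If some $x \in X$ carried two pendant $Y$-neighbors $y_1,y_2$, then both are forced to be endpoints and each is adjacent only to $x$, so the Hamiltonian path would have to be the three-vertex path $(y_1,x,y_2)$; but for $n \ge 2$ the graph has at least four vertices, contradicting spanning. This part, like the upper bound, is routine once the necessary condition is in hand.
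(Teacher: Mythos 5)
Your proposal is correct, but it takes a genuinely different route from the paper. The paper assumes the Hamiltonian path $P$ exists and derives contradictions by counting directly along $P$: for the upper bound it analyzes how $P$ traverses $X_{p..q}$ (in order or out of order, with a separate case analysis for $x_p$ or $x_q$ appearing as endpoints or penultimate vertices, which is where the path-specific slack of $+1$ over the cycle bound becomes visible); for the lower bound it argues that a path containing all of $Z=\bigcup_{i=1}^j N'_G[X_{p_i..q_i}]$ can pick up at most $|Z|+1$ vertices of $\bigcup_{i=1}^j X_{p_i..q_i}$; and the pendant bound is handled as in your Case~4. You instead derive all three conditions uniformly from the single standard necessary condition $c(G-S)\le |S|+1$, with separators $S=X_{p..q}$ and $S=Z$ and convexity-based component counting. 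Notably, your separator choices are essentially the same sets the paper itself uses later when proving that Chv\'atal's condition is sufficient for $2$-connected convex bipartite graphs, so your argument unifies this lemma with that machinery. Your two structural claims (interior vertices of each interval survive only with pendant neighbors; each interval is a barrier, by convexity) are indeed the crux, and both are correctly justified; in particular, your decision to place both boundary vertices $x_{q_i}$ and $x_{p_{i+1}}$ inside the same closed region handles the touching case $q_i=p_{i+1}$ cleanly --- there a vertex $y\notin Z$ can be adjacent to exactly $x_{q_i}$ and $x_{p_{i+1}}$, a configuration the paper's ``at most $|Z|+1$ interval vertices'' count must implicitly absorb, whereas in your decomposition it stays within one region and never threatens the bound $c(G-Z)\ge \sum_{i=1}^{j}(q_i-p_i)+1$. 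What each approach buys: yours is modular, reusable, and arguably more robust on edge cases; the paper's is self-contained and makes explicit, via the endpoint/penultimate analysis, exactly why paths admit the bounds $q-p+1$ and $q-p+2$ where cycles (Property~A) admit only $q-p$ and $n$.
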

\begin{proof} For a contradiction, assume that $G$ has a Hamiltonian path $P$ and does not satisfy Property B.\\
\textbf{Case 1:} There exists $X_{p..q}$ such that $|N_G[X_{p..q}]| > q-p+1$, $1= p<q<n$ or $1<p<q=n$ or $1<p<q<n$. Then, \\
Case 1.1: $P$ contains vertices from $X_{p..q}$ and neither $x_p$ nor $x_q$ as endpoints as well as penultimate vertices.   Consider $(u_1,\ldots,u_{q-p+1})$ which is a permutation of $X_{p..q}$, and $P$ is such that $V(P)$ contains this permutation as an ordered subset.  That is, the vertices in the chosen permutation appears in order in $P$.  It is clear that $P$ contains vertices from $X_{p..q}$ and $Y$ alternately.  Therefore, $P$ has at most $q-p$ vertices from $Y$ ($N_G[X_{p..q}]$).  This implies that $P$ is not a Hamiltonian path, a contradiction. \\
Case 1.2: Same as Case 1.1 except that either $x_p$ or $x_q$ as {endpoints}  or penultimate vertices in $P$.  In this case,  $P$ includes at most $q-p+1$ vertices that belong to $N_G[X_{p..q}]$ and hence $P$ is not a Hamiltonian path.\\
Case 1.3: In this case, the vertices in the permutation are visited out of order.   That is, $P$ is such that $x_l \in X \setminus X_{p..q}$ appears in the subpath containing $X_{p..q}$.  If suppose, neither $x_p$ nor $x_q$ as endpoints as well as penultimate vertices, then $P$ includes at most $q-p-1$ vertices of $N_G[X_{p..q}]$.  On the other hand,  $x_p$ or $x_q$ or both can appear as endpoints or penultimate vertices in $P$, $p\not=1$ and $q\not=n$.  Accordingly, we see that $P$ includes at most $q-p$ or $q-p+1$ vertices that belong to $N_G[X_{p..q}]$ and hence $P$ is not a Hamiltonian path.\\
\textbf{Case 2:} There exists $X_{p..q}$ such that $|N_G[X_{p..q}]|>q-p+2$, $p=1$ and $q=n$. Since any path in $G$ alternates between $X$ and $Y$ vertices, any Hamiltonian path contains at most $n+1$ vertices from $Y$, a contradiction.  \\ 
\textbf{Case 3:} There exists $j$ number of sets $X_{p_i..q_i}$, $1 \leq i \leq j$ where $j \in [1..n-1]$ such that  
\begin{equation*}
	|\bigcup_{i=1}^j N'_G[X_{p_i..q_i}]|< (\sum_{i=1}^{j-1} q_i-p_i)+ q_j-p_j, 
	\begin{cases} & \text{if $j>1$, $ 1\leq p_i<q_i \leq p_{i+1}<q_{i+1}\leq n $}\\ & \text{if $j=1$, $1\leq p_j<q_j\leq n.$}\end{cases}
\end{equation*}
Let the set $Z=\bigcup_{i=1}^j N'_G[X_{p_i..q_i}]$ and $|Z|=(\sum_{i=1}^{j} q_i-p_i)-k$ where $k>0$. We observe that any path containing all the vertices in the set $Z$ can include at most $|Z|+1$ vertices that belong to the set $\bigcup_{i=1}^j X_{p_i..q_i}$. Moreover, $|\bigcup_{i=1}^j X_{p_i..q_i}|=(\sum_{i=1}^{j} q_i-p_i)+j>|Z|+1$. Therefore, $G$ does not have a Hamiltonian path. \\
\textbf{Case 4:} The number of pendants is at least 3.  Clearly, no Hamiltonian path can have all three pendant vertices.  \\
Thus, our claim that convex bipartite graphs with Hamiltonian paths satisfy Property B is true. \qed
\end{proof}
We shall next focus on monotone convex bipartite graphs and present a linear-time algorithm (Algorithm \ref{monoalgo}) for HAMILTONIAN PATH.
\subsection{Proof of correctness of Algorithm \ref{monoalgo}}
\begin{lemma}
\label{g'}
Let $G$ be a monotone convex bipartite graph.  Consider the graph $G'$ induced on $V(G')= \\
V(G) \setminus \{x_n,y,z\}$, where $y \in N_G(x_n)$ such that $left(y)$ is maximum and $z$ is a pendant adjacent to $x_n$ if $z$ exists. (or)
$V(G) \setminus \{x_n,y\}$, where $y \in N_G(x_n)$ such that $left(y)$ is maximum. \\
Then $G'$ is monotone.  
\end{lemma}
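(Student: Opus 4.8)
The plan is to verify directly that $G'$ meets every defining condition of a monotone graph: that $G'$ is convex bipartite, that it satisfies the upper bound, lower bound and pendant bound of Property B, and that it satisfies the two monotone conditions. Writing $n'=|X(G')|=n-1$, the backbone of the argument is a dictionary between windows of $G'$ and windows of $G$. Convexity is immediate: deleting $x_n$ from the right end of the ordering and deleting some $Y$-vertices keeps each surviving neighbourhood consecutive, so $G'$ is convex bipartite on $X'=(x_1,\ldots,x_{n-1})$. For any window with $q<n-1$ one checks that a $Y$-vertex adjacent to $x_n$ contributes neither to $N_G[X_{p..q}]$ (its neighbourhood contains $x_n\notin X_{p..q}$) nor to $N_{G'}[X'_{p..q}]$ (its right endpoint becomes $n-1>q$); hence $N_{G'}[X'_{p..q}]=N_G[X_{p..q}]$ for every interior window with $q<n-1$. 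Similarly, since every window considered has $q\le n-1<n$, intersecting a neighbourhood with such a window is unaffected by deleting $x_n$, which yields the single clean identity $\bigcup_i N'_{G'}[X'_{p_i..q_i}]=\bigl(\bigcup_i N'_G[X_{p_i..q_i}]\bigr)\setminus\{y\}$ (the pendant $z$, having degree $1$, lies in no $N'$-set).

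Using this dictionary the easy conditions fall out. For monotone condition M2 I would note that an interior window of $G'$ (with $1<p<q<n-1$) has $N_{G'}[X'_{p..q}]=N_G[X_{p..q}]$, and that monotonicity of $G$ forces $|N_G[X_{p..q}]|\le q-p$ for interior windows (a chain-of-extensions argument shows a monotone graph has no interior window meeting the upper bound with equality, since any such window could otherwise be enlarged to a maximal interior one); hence $G'$ has no interior maximal window either. For the upper bound at the new right boundary $X'_{p..n-1}$ I would relate it to $X_{p..n}$ in $G$: the one-unit drop needed to absorb the index shift is supplied by deleting $z$ when $z$ exists, and otherwise by deleting $y$ when $\mathit{left}(y)\ge p$, while if $\mathit{left}(y)<p$ then $x_n$ has no neighbour with left endpoint $\ge p$ and $N_{G'}[X'_{p..n-1}]$ collapses to $N_G[X_{p..n-1}]$, already bounded by $n-p$. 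For the pendant bound and M1, the only pendants created are $Y$-vertices with neighbourhood $\{x_{n-1},x_n\}$, which become pendants at the new endpoint $x_{n-1}$ and are therefore never interior; the upper bound applied to $X_{n-1..n}$ caps the number of such vertices by $2$ (minus one if $z$ exists), and deleting the max-left neighbour $y$ removes one of them, so $G'$ retains at most one pendant at $x_{n-1}$ and at most two pendants overall.

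The crux is the lower bound. By the displayed identity, passing from $G$ to $G'$ removes at most the single vertex $y$ from each $N'$-union, so the inequality is in danger only on families $X'_{p_1..q_1},\ldots,X'_{p_j..q_j}$ for which $G$'s bound is tight and $y\in\bigcup_i N'_G[X_{p_i..q_i}]$. Here I would first observe that, because $y$ is the neighbour of $x_n$ with maximum left endpoint and the windows are increasing, $y$ must in fact lie in the last set $N'_G[X_{p_j..q_j}]$, which forces $\mathit{left}(y)\le q_j-1$. The main obstacle is then to manufacture one unit of slack in $G$'s bound. The idea is to apply the lower bound of $G$ to a modified family that reaches $x_n$ — either by extending the last window to $X_{p_j..n}$, or, since $j\le n'-1=n-2$ leaves room for one extra window, by appending a window covering $\{x_{q_j+1},\ldots,x_n\}$ — and to exploit that every neighbour of $x_n$ (all with left endpoint at most $\mathit{left}(y)\le q_j-1$) is captured by the enlarged window, so that the enlarged right-hand side exceeds the original by at least one. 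Controlling the extra $N'$-vertices that the enlarged window introduces, so that the gained slack is not re-absorbed, is the delicate point, and I expect it to require the upper bound on $X_{q_j+1..n}$ together with the strict interior inequality for monotone graphs established above. Once this slack is secured, $|\bigcup_i N'_{G'}[X'_{p_i..q_i}]|=|\bigcup_i N'_G[X_{p_i..q_i}]|-1\ge\sum_i(q_i-p_i)$ follows, completing the verification that $G'$ is monotone.
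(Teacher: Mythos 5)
Your convexity, upper-bound, pendant, and monotone-condition checks are sound; in fact your dictionary $N_{G'}[X_{p..q}]=N_G[X_{p..q}]$ for $q\le n-2$, together with the boundary analysis at $q=n-1$ (absorb the index shift by deleting $z$ when it exists, or $y$ when $left(y)\ge p$, and note the collapse $N_{G'}[X_{p..n-1}]=N_G[X_{p..n-1}]$ when $left(y)<p$), reproduces exactly the paper's three upper-bound cases, and your accounting of newly created pendants (the vertices with neighbourhood $\{x_{n-1},x_n\}$, capped by the upper bound on $X_{n-1..n}$) is more explicit than the paper's one-line treatment. But on the step you yourself identify as the crux --- preserving the lower bound after $y$ is deleted --- you do not give a proof: you propose enlarging a tight family by $X_{p_j..n}$ or $X_{q_j+1..n}$ and then concede that controlling the extra $N'$-vertices the enlarged window introduces is an unresolved ``delicate point.'' That concession marks a genuine gap, not a routine detail. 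With your appended window the required bound grows by $n-q_j-1$, and $N'_G[X_{q_j+1..n}]$ can contain vertices invisible to both tools you invoke: a vertex $u$ with $N_G(u)=\{x_{q_j},\ldots,x_{n-1}\}$ lies in $N'_G[X_{q_j+1..n}]$ (if $q_j\le n-3$), yet meets the last old window only in $x_{q_j}$, so it need not lie in the old union; it is not a neighbour of $x_n$, so maximality of $left(y)$ says nothing about it; and it is not in $N_G[X_{q_j+1..n}]$, so the upper bound on that window does not count it. Hence the slack you need may indeed be re-absorbed, and your argument does not close.

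The paper closes it with a smaller hammer: append the two-element window $X_{n-1..n}$ (legal since $q_j\le n-1$), which raises the required bound by exactly $1$ and whose $N'$-set is precisely the set of common neighbours of $x_{n-1}$ and $x_n$. Every such vertex is a neighbour of $x_n$, hence has left endpoint at most $k=left(y)$, hence its neighbourhood contains all of $X_{k..n-1}$ and it already lies in $N'_G$ of every window of the family contained in $[k,n-1]$; so the appended window contributes no new $N'$-vertices at all. Concretely, the paper first disposes of $d_G(y)=2$ (then $y$ belongs to no $N'$-set over windows with $q\le n-1$ and nothing changes), and for $d_G(y)>2$ with $N_G(y)=\{x_k,\ldots,x_n\}$ splits any family at $k$ into $Z'$ (windows in $[1,k]$, untouched by deleting $y$) and $Z$ (windows in $[k,n-1]$), proving the strengthened claim $|Z|\ge W+1$: if $|Z|\le W$, then $N'_G[X_{n-1..n}]\subseteq Z$ gives $|Z\cup N'_G[X_{n-1..n}]|\le W<W+1$, contradicting Property B's lower bound for the extended family; deleting $y$ then costs exactly one unit and the bound survives. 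Your instinct --- reach $x_n$ with an extra window and exploit maximality of $left(y)$ --- is the paper's; the missing idea is to take the minimal window $X_{n-1..n}$, for which the ``no new $N'$-vertices'' verification is immediate and the re-absorption problem you could not resolve simply vanishes.
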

\begin{proof}   
We {first show} that $G'$ satisfies Property B. To prove the upper bound, we consider the following cases; for any integer $p$ and $q$ such that
\begin{enumerate}
\item $1 \leq p < q \leq n-2$. Then, $N_{G'}[X_{p..q}]=N_G[X_{p..q}]$.
\item $1 < p \leq n-2$ and $q=n-1$.   {\bf Case: $z$ does not exist}.  Suppose $N_G(y) \subseteq X_{p..n}$.  Observe that $N_{G'}[X_{p..n-1}]=N_G[X_{p..n}] \setminus \{y\}$.  This implies that $|N_{G'}[X_{p..n-1}]|= |N_G[X_{p..n}]|-1 \leq (n-p+1)-1\leq (n-1)-p+1$ which is the desired upper bound of Property B in $G'$.  Otherwise, $N_G(y) \not \subseteq X_{p..n}$.  Then $y$ is adjacent to $x_l, 1 \leq l \leq p-1$.  In this case, $N_{G'}[X_{p..n-1}]=N_G[X_{p..n}]$.  Thus $G'$ satisfies the upper bound. 
\textbf{Case: $z$ exists}.  Suppose $N_G(y) \subseteq X_{p..n}$.  Observe that $N_{G'}[X_{p..n-1}]=N_G[X_{p..n}] \setminus \{y,z\}$.  Further, $|N_{G'}[X_{p..n-1}]|= |N_G[X_{p..n}]|-2 \leq (n-p+1)-2 \leq (n-1)-p$ which satisfies the upper bound of Property B in $G'$.  Otherwise, $N_G(y) \not \subseteq X_{p..n}$.   In this case, we observe that since $G$ satisfies the upper bound, the graph $V(G) \setminus \{z\}$ also satisfies the upper bound.  Thus, $G'$ satisfies the upper bound of Property B.
\item $p=1$ and $q=n-1$. {\bf Case: $z$ does not exist}.  Note that $N_{G'}[X_{1..n-1}]=N_G[X_{1..n}] \setminus \{y\}$ and $|N_{G'}[X_{1..n-1}]|= |N_G[X_{1..n}]|-1 \leq (n+1)-1 = (n-1)-1+2$ which satisfies the upper bound of Property B in $G'$.
{\bf Case: $z$ exists}. Note that $N_{G'}[X_{1..n-1}]=N_G[X_{1..n}] \setminus \{y,z\}$.  This implies that $|N_{G'}[X_{1..n-1}]|= |N_G[X_{1..n}]|-2 \leq (n+1)-2 \leq (n-1)-1+1$, satisfying the upper bound of Property B in $G'$.
\end{enumerate}
We next prove that in $G'$, the lower bound of Property B is satisfied.  Note that by definition, $z$ is not considered while computing the lower bound.   If $d_G(y)=2$, then for the set $X_{p..q}$, $1 \leq p < q \leq n-1$, the lower bound of Property B in $G'$ is {the} same as in $G$.\\
Consider the case when $d_G(y) > 2$.  Let $N_G(y)=\{x_k,\ldots,x_n\}$.  
Then, for any $X_{p..q}$, $1 \leq p < q \leq k$, the lower bound in $G'$ is {the} same as in $G$.  For any integer $j' \in [0..k-1]$ and $j \in [1..n-k]$, let  $Z'$ denote $\bigcup_{i'=1}^{j'} N'_G[X_{p_{i'}..q_{i'}}]$, {where} if $j'=0$, $|Z'|=0$ or if $j'=1$, $1\leq p_{i'} < q_{i'}\leq k$ or if $j'>1$, $1\leq p_{i'} < q_{i'}\leq p_{i'+1} < q_{i'+1} \leq k$, $Z$ denote $\bigcup_{i=1}^j N'_G[X_{p_i..q_i}]$, {where} if $j=1$, $k \leq p_i < q_i \leq n-1$ or if $j>1$,  $k\leq p_{i} < q_{i}\leq p_{i+1} < q_{i+1} \leq n-1$, $W'$ denote $(\sum_{i'=1}^{j'} q_{i'}-p_{i'})$ and $W$ denote $(\sum_{i=1}^{j} q_i-p_i)$.  It is clear that $|Z' \cup Z| \geq W'+W$.  We now claim that  $|Z' \cup Z| \geq W'+W+1$.  \\
We observe that $|Z| \geq W+1$. If, on the contrary, $|Z| < W+1$. Note that the set $Z$ includes $y$ as well.  Observe that in $G$,  $|Z \cup N'_G[x_{n-1..n}]| < W+1$, contradicting the lower bound of Property B.   Therefore,  $|Z| \geq W+1$.  This implies $|Z' \cup Z| \geq W'+W+1$.
We see that $\bigcup_{i'=1}^{j'} N'_{G'}[X_{p_{i'}..q_{i'}}] \cup \bigcup_{i=1}^j N'_{G'}[X_{p_i..q_i}]= Z'\cup Z \setminus \{y\}$.  This means that $|\bigcup_{i'=1}^{j'} N'_{G'}[X_{p_{i'}..q_{i'}}] \cup \bigcup_{i=1}^j N'_{G'}[X_{p_i..q_i}]|=|Z'\cup Z|-1 \geq W'+W$. Therefore, $G'$ satisfies the lower bound of Property B.\\
Since in $G'$, no new vertex is introduced, the number of pendant vertices in $G'$ is still at most two.   In what follows from the argument is that $G'$ satisfies Property B, and further, it is easy to see that $G'$ satisfies the definition of monotone.  \qed
\end{proof}
\begin{lemma}\label{ob6}
Let $G$ be a convex bipartite graph satisfying Property B. If $G$ is monotone, then $|X|-1 \leq |Y| \leq |X|+1$.
\end{lemma}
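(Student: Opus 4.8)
The plan is to derive both inequalities directly from the two size bounds of Property B, applied to the full index range $X_{1..n}$; monotonicity is not actually needed for the counting, only Property B is. The whole argument rests on recognizing that the abstract sets appearing in Property B, when taken over the entire range, are precisely recognizable subsets of $Y$.

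First I would record two set identities. Since $G$ is bipartite with convexity on $X$, every $y \in Y$ satisfies $N_G(y) \subseteq X = X_{1..n}$. Hence, by the definition of $N_G[\cdot]$, we have $N_G[X_{1..n}] = \{y \mid N_G(y) \subseteq X_{1..n}\} = Y$. Similarly, $N'_G[X_{1..n}] = \{y \mid |N_G(y) \cap X_{1..n}| \ge 2\} = \{y \mid d_G(y) \ge 2\} \subseteq Y$. With these identities, the two bounds of Property B at the extreme range immediately translate into bounds on $|Y|$.

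For the upper bound, I would apply the upper-bound clause of Property B to $X_{1..n}$, i.e.\ with $p=1$ and $q=n$, the case $p=1$ and $q=n$ of the definition. This gives $|N_G[X_{1..n}]| \le (n-1)+2 = n+1$. Combined with $N_G[X_{1..n}] = Y$ and $n = |X|$, this yields $|Y| \le |X|+1$. For the lower bound, I would instantiate the lower-bound clause of Property B with $j=1$, $p_1=1$, $q_1=n$, which satisfies the required condition $1 \le p_1 < q_1 \le n$ whenever $n \ge 2$. The empty sum $\sum_{i=1}^{0}$ contributes $0$, so the clause reads $|N'_G[X_{1..n}]| \ge q_1 - p_1 = n-1$. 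Using the containment $N'_G[X_{1..n}] \subseteq Y$ gives $|Y| \ge |N'_G[X_{1..n}]| \ge n-1 = |X|-1$. Combining the two estimates proves $|X|-1 \le |Y| \le |X|+1$.

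I do not expect any genuine obstacle here: both bounds are immediate specializations of Property B at the range $p=1,\, q=n$, and the only real content is the two set identities above. The sole point meriting a word of care is the degenerate case $n=1$, where the lower-bound clause is vacuous (its index set $[1..n-1]$ is empty); there $|Y| \ge 0$ is trivial, while the pendant clause of Property B still forces $|Y| \le 2 = |X|+1$, so the statement continues to hold. It is worth noting in passing that neither the pendant clause nor the monotonicity hypothesis is used in the main argument, so the bound in fact holds for every convex bipartite graph satisfying Property B.
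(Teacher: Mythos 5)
Your proof is correct and follows essentially the same route as the paper: both bounds are obtained by instantiating Property B at the full range ($p=1$, $q=n$ for the upper bound, and $j=1$, $p_1=1$, $q_1=n$ for the lower bound) together with the identities $N_G[X_{1..n}]=Y$ and $N'_G[X_{1..n}]\subseteq Y$. Your added remarks --- the treatment of the degenerate case $n=1$ via the pendant clause, and the observation that monotonicity is never used --- are accurate refinements of the paper's terser argument, which likewise makes no use of the monotone hypothesis.
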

\begin{proof}
It is easy to see that if $|Y| < |X|-1$, then $|N'_G[X_{1..n}]| \leq |Y| < n-1$, a contradiction to the lower bound of Property B.  Similarly, 
if $|Y|>|X|+1$, then $|N_G[X_{1..n}]| =|Y|> n+1$ which is a contradiction to the upper bound of Property B. \qed
\end{proof}
\begin{lemma} 
\label{l}
Let $G$ be a convex bipartite graph satisfying Property B.  If $G$ is monotone, then Algorithm \ref{monoalgo} gives a Hamiltonian path.
\end{lemma}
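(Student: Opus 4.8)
The plan is to prove correctness by induction on $n=|X|$, with Lemma \ref{g'} supplying the reduction to a smaller monotone instance. First I would observe that Lemma \ref{ob6} guarantees $|Y|\in\{n-1,n,n+1\}$, so exactly one of the three branches of Algorithm \ref{monoalgo} is entered, and the base cases ($n=1,2$) are checked by inspection. The heart of the argument is the branch $|Y|=|X|-1$ (the $X$--$X$ path), where the loop repeatedly assigns to each $x_i$, $1\le i\le n-1$, an unmarked neighbour $y$ of minimum $right(y)$ and appends $(x_i,y)$, finally appending $x_n$. I must establish two facts about this greedy loop: \emph{feasibility}, that when $x_i$ is processed it always has an unmarked neighbour, and \emph{completeness}, that after the loop every vertex of $Y$ has been consumed, so the resulting alternating walk $(x_1,y_1,\dots,x_{n-1},y_{n-1},x_n)$ is a Hamiltonian path.

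For feasibility I would argue by contradiction: if $x_i$ has no unmarked neighbour, then all neighbours of $x_1,\dots,x_i$ already lie among the $i-1$ previously chosen vertices, which forces a prefix family of ranges whose $N'_G$ union is too small, contradicting the lower bound of Property B. The choice of minimum $right(y)$ is the standard greedy rule for convex bipartite graphs, and I would show by an exchange argument that it never blocks a later index, since replacing a future assignment by the minimum-$right$ vertex can only enlarge the pool available to indices beyond $i$. Completeness then follows from the cardinality $|Y|=n-1$ together with feasibility, as each of the $n-1$ iterations consumes exactly one $Y$ vertex. Alternatively, I would run the induction through Lemma \ref{g'}: deleting $x_n$ and the neighbour $y$ of $x_n$ with maximum $left(y)$ yields a monotone $G'$ with $|X'|=n-1$ and $|Y'|=n-2$, again the $X$--$X$ case; the induction hypothesis gives a Hamiltonian path of $G'$ ending at $x_{n-1}$, and since $left(y)\le n-1$ (monotonicity forbids an interior pendant, so $y$ sits only on $x_n$ in the exceptional case handled by the $z$-branch of Lemma \ref{g'}) the edges $\{x_{n-1},y\}$ and $\{y,x_n\}$ let me splice $y,x_n$ onto the end.

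For the branch $|Y|=|X|$ I would show the forward loop either completes an $X$--$Y$ path directly or, at the first index where extension fails, the algorithm prepends the minimum-$right$ neighbour $y$ of $x_1$ as $(y,x_1)$ and recurses on $V(G)\setminus\{y\}$ as an $X$--$X$ instance with $|Y|-1=|X|-1$. The crux is to verify that this particular deletion preserves both monotonicity and Property B---an argument in the spirit of Lemma \ref{g'} but applied at the $x_1$ end---so that the already-proved $X$--$X$ case applies. The branch $|Y|=|X|+1$ (the $Y$--$Y$ path) then reduces in one further step: deleting the minimum-$right$ neighbour of $x_1$ lowers $|Y|$ to $|X|$ and leaves an $X$--$Y$ instance handled above.

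I expect the main obstacle to be the feasibility-with-completeness argument for the greedy loop, together with the proof that, when the direct $X$--$Y$ orientation stalls, the fallback to $Y$--$X$ (via deleting a neighbour of $x_1$ and building an $X$--$X$ path) is \emph{guaranteed} to succeed. The delicate point is ruling out an internal obstruction: the monotone hypotheses---no pendant on an interior $x_k$ and no \emph{maximal} $X_{p..q}$ with $1<p<q<n$---are exactly what prevent the path from being trapped in the middle, and I would have to invoke them precisely at the index where the greedy stalls to show that the failure of one orientation forces the success of the other.
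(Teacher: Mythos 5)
Your skeleton matches the paper's (induction on $|X|$ via Lemma \ref{g'}, with Lemma \ref{ob6} giving the three-way case split), but the load-bearing steps you defer are precisely where your plan diverges from the paper and breaks. The central problem is your reduction for the $|Y|=|X|$ and $|Y|=|X|+1$ branches: you delete only the minimum-$right$ neighbour $y$ of $x_1$, keep $x_1$, and invoke ``an argument in the spirit of Lemma \ref{g'} but applied at the $x_1$ end'' to claim $G-y$ is again monotone and satisfies Property B. No such lemma is available, and the claim is false as stated: Lemma \ref{g'} deletes $x_n$ \emph{together with} $y$ (and the pendant $z$ when it exists), and its proof of both the upper and lower bounds uses the simultaneous removal of the $X$-vertex to offset the loss of $y$; deleting $y$ alone can isolate $x_1$ (when $d_G(x_1)=1$, the fallback's $X$--$X$ subroutine cannot even leave $x_1$) and in general destroys the lower bound for prefix sets $X_{1..q}$. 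The paper never reduces at the $x_1$ end in any case. Instead, all three cases apply the induction hypothesis to the same $x_n$-end graph $G'$, with the subcases driven by whether the pendant $z$ at $x_n$ exists --- that existence changes which path type ($X$--$X$, $X$--$Y$, $Y$--$X$) the hypothesis returns and how $z$ is appended as the final endpoint, a case split your proposal omits entirely. The guarantee you correctly identify as delicate (that the $Y$--$X$ fallback must succeed when the $X$--$Y$ orientation stalls) is proved in the paper not by a deletion lemma but by a stall contradiction: a break at $x_l$ forces $|N'_G[X_{1..l}]|=l$, hence $|N_G[X_{l+1..n}]|=(n+1)-l>n-(l+1)+1$, violating the upper bound of Property B, and simultaneously makes $X_{l+1..n}$ a maximal set, contradicting monotonicity.

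Two further mechanisms in your sketch are wrong in detail. First, your feasibility argument for the greedy loop claims a stall ``forces a prefix family of ranges whose $N'_G$ union is too small, contradicting the lower bound'': it does not. A stall at $x_i$ yields at most $i-1$ marked $Y$-vertices covering the prefix, i.e.\ $|N'_G[X_{1..i}]|\leq i-1$, which is perfectly consistent with the lower bound $\geq i-1$; the operative contradiction lives on the \emph{suffix}, via the upper bound and the interior-maximal-set prohibition, as above. Second, in your splice for the $X$--$X$ case you justify $left(y)\leq n-1$ by saying ``monotonicity forbids an interior pendant'': a pendant at $x_n$ is not interior and is allowed in monotone graphs, so this does not rule out $y$ being pendant at $x_n$. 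The paper excludes the pendant $z$ in this case by counting, not by monotonicity: if $z$ existed then $|Y|\geq |N'_G[X_{1..n}]|+1\geq (n-1)+1=n$, contradicting $|Y|=n-1$. These are not presentational gaps but missing arguments; without them the $|Y|=|X|$ and $|Y|=|X|+1$ branches of your proof do not go through.
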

\begin{proof}
The proof is by induction on $|X|$.  \textbf{Base:} $|X|$=1. The possible instances are $|X|=1$ with $|Y| \in \{0,1,2\}$.  Clearly, the instances are paths of length $0$ or $1$ or $2$.  Thus, in all, the algorithm indeed gives the Hamiltonian path.  \textbf{Induction Hypothesis:} Let $G$ be a graph satisfying the premise and $|X| < n, n>1$.  Assume that the algorithm outputs a Hamiltonian path $P$ in $G$. \\
\textbf{Induction Step:} Let $G$ be a graph satisfying the premise such that $|X|= n, n >1$.  To use the induction hypothesis, we construct the graph $G'$ as defined in Lemma \ref{g'}.  Consider the graph $G'$ induced on $V(G')= \\
V(G) \setminus \{x_n,y,z\}$, where $y \in N_G(x_n)$ such that $left(y)$ is maximum and $z$ is a pendant adjacent to $x_n$. (or) \\
$V(G) \setminus \{x_n,y\}$, where $y \in N_G(x_n)$ such that $left(y)$ is maximum.  \\
{Further,} we know that $G'$ is monotone.  We make use of Lemma \ref{ob6} to complete the induction. \\
\textbf{Case 1:} $|Y|=|X|-1$.  We first observe that $z$ does not exist in $G$.  Suppose $z$ exists in $G$.  Then, $N'_G[X_{1..n}] \cup \{z\} \subseteq Y$.  Further, $|Y| \geq |N'_G[X_{1..n}]|+1=(n-1)+1=|X|$, a contradiction.   Clearly, in $G'$, $|X|=n-1<n$ and by the induction hypothesis, the algorithm computes $X-X$ Hamiltonian path $P$ in $G'$.  The path $P$ can be extended as $(P,y,x_n)$, the desired $X-X$ Hamiltonian path in $G$.\\
\textbf{Case 2:} $|Y|=|X|$.  Suppose $z$ exists in $G$.  Then, in $G'$, by the hypothesis, we get $X-X$  (since $|Y|=|X|-1$) Hamiltonian path $
P$.  Further, in $G$, $(P,y,x_n,z)$ is the desired $X-Y$ Hamiltonian path.  If $z$ does not exist, then by the induction hypothesis, we get either $X-Y$ path or $Y-X$ path $P$ in $G'$.  Accordingly, $(P,x_n,y)$ or $(P,y,x_n)$ is a Hamiltonian path in $G$. \\
\textbf{Case 3:} $|Y|=|X|+1$.  Suppose $z$ exists in $G$.  Then in $G'$, by the hypothesis, $Y-X$ path $P$ exists in $G'$.  The extended path $(P,y,x_n,z)$ is a Hamiltonian path in $G$.  Note that if $X-Y$ path exists in $G'$, then we claim that $Y-X$ path also exists in $G'$.  Suppose $Y-X$ path does not exist in $G'$.  Then as per Step 13 of our algorithm, while the algorithm is exploring $Y-X$ path, it breaks at  some iteration, say $l$.  This means at $x_l \in X$, there is no $y$ incident on $x_l$ using which the path can be extended further.  Clearly, the path has counted $l$ vertices in $X$ and $l$ vertices in $Y$.  The number of $Y$ vertices yet to be explored is $(n+1)-l$.   Note that when the algorithm breaks at $x_l$,  the unmarked vertices of $Y$ are such that their neighborhood is contained in $\{x_{l+1},\ldots,x_n\}$.  
Observe that $|N'_G[X_{1..l}]|=l$ and hence, $|N_G[X_{l+1..n}]|=(n+1)-l > (n-(l+1)+1)$.  This is in violation of the upper bound of Property B.  We further observe that the set $\{x_{l+1},\ldots,x_n\}$ is a maximal set.  However, we know that no maximal sets can exist in $G$ as $G$ is monotone.  This is a contradiction.   Thus, we always obtain a $Y-X$ path in $G'$ if exists.
If $z$ does not exist in $G$, then the path $P$ from the hypothesis together with $x_n$ and $y$ yields the desired Hamiltonian path in $G$.  \qed
\end{proof}
{\bf Run-time Analysis:} The primitive operation in our algorithm is to choose a $y$ vertex such that {the} degree of $y$ is minimum and this must be done at each $x \in X$.  Clearly, this task incurs $O(n+m)$ as the sum of the degrees is $O(|E(G)|)$.  Thus, our algorithm runs in linear time.
\begin{theorem}
Let $G$ be a convex bipartite graph satisfying the monotone property.  The Hamiltonian path problem is linear-time solvable.
\end{theorem}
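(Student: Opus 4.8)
The plan is to assemble the theorem from the two ingredients already established in this subsection: the correctness of Algorithm~\ref{monoalgo} and a bound on its running time. First I would observe that, by definition, every monotone convex bipartite graph satisfies Property B, so the hypotheses of Lemma~\ref{l} are met; consequently Algorithm~\ref{monoalgo} terminates with a genuine Hamiltonian path $P$ of $G$. This settles correctness, and I would simply cite Lemma~\ref{l} rather than re-derive the case analysis (the $X$-$X$, $X$-$Y$/$Y$-$X$, and $Y$-$Y$ cases) on which it rests.

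Second, I would establish the linear-time bound. The algorithm performs essentially one pass over $X$, and at each $x_i$ the only non-constant work is selecting an unmarked neighbor $y \in N_G(x_i)$ with $right(y)$ minimum. Charging the cost of each such selection to the edges incident on $x_i$, the total work is $O\!\left(\sum_{i} d_G(x_i)\right)=O(m)$, and together with the $O(n)$ bookkeeping this yields an $O(n+m)$ running time. Here I would lean on the run-time analysis preceding the statement.

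To cover the decision version on inputs that exhibit the two structural features of the monotone definition but are not promised to satisfy Property B, I would prepend a linear-time verification of Property B. If the test fails, the necessary-condition lemma (if $G$ has a Hamiltonian path then $G$ satisfies Property B) certifies that no Hamiltonian path exists; otherwise $G$ is monotone and Algorithm~\ref{monoalgo} constructs the path. Both the test and the construction run in $O(n+m)$.

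The step I expect to be the main obstacle is the amortized-cost argument: one must argue that the repeated queries for the smallest $right(y)$ among the still-unmarked neighbors of $x_i$ can be realized so that their aggregate cost really is the sum of degrees and not $O(nm)$. This is ensured by scanning each adjacency list at most once, using the convexity ordering on $X$ to access the candidate neighbors, and skipping already-marked vertices in amortized constant time. Everything else reduces to invoking results proved earlier in the excerpt.
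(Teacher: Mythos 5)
Your proposal matches the paper's proof, which is exactly the same two-ingredient assembly: correctness is cited from Lemma~\ref{l}, and the linear running time comes from the preceding run-time discussion charging each choice of a minimum-$right(y)$ unmarked neighbor to the edges incident on $x_i$, for a total of $O(n+m)$. Your third paragraph on verifying Property B for the decision version is superfluous for this statement (the hypothesis that $G$ is monotone already includes Property B by definition, and the paper nowhere establishes that Property B is testable in linear time), but since it is an optional add-on it does not affect the correctness of your proof of the theorem as stated.
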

\begin{proof}
Follows from Lemma \ref{l} and the above discussion.
\end{proof}
{\bf Trace of Algorithm \ref{monoalgo}:} An illustration given in Figure \ref{monofig} presents an example {of} monotone convex bipartite graphs for each category ($X-X$,$X-Y$, $Y-X$, $Y-Y$) along with the Hamiltonian paths traced as per our algorithm.\\\\
\begin{figure}
\begin{center}
\includegraphics[scale=0.7]{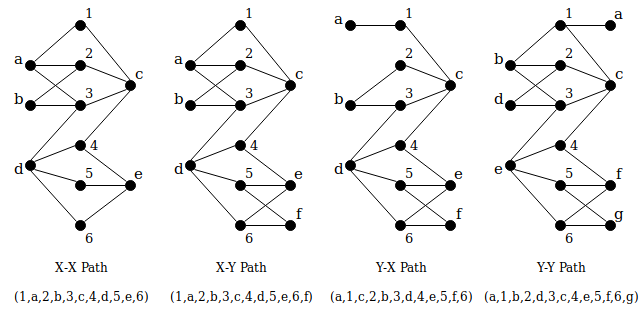}
\caption{Trace: Hamiltonian paths in monotone graphs}\label{monofig}
\end{center}
\end{figure}
\noindent
{\bf Remark:} In \cite{esha}, an algorithm to find {the} longest paths (and hence {the} Hamiltonian paths) in biconvex graphs is presented, however, the illustration given in Figure \ref{counteresha} shows that the algorithm does not output the solution correctly.   For the example given, the Hamiltonian path is $(d,3,b,1,a,2,c,4)$ whereas the output is $(1,a,2,b,3,c,4)$ which is not a Hamiltonian path.  We have given a sketch of how an infinite set of {counterexamples} can be constructed keeping the graph given on the left of the figure as a template.  Interestingly, all {counterexamples} are non-monotone convex bipartite graphs and hence, the complexity of {Hamiltonian} path in non-monotone convex bipartite graphs are open.  As far as monotone convex bipartite graphs are concerned, the algorithm in \cite{esha} gives the output in $O(n^6)$ time, while our algorithm outputs in linear time. 
\begin{figure}
\begin{center}
\includegraphics[scale=0.7]{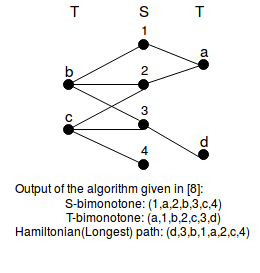}
\includegraphics[scale=0.5]{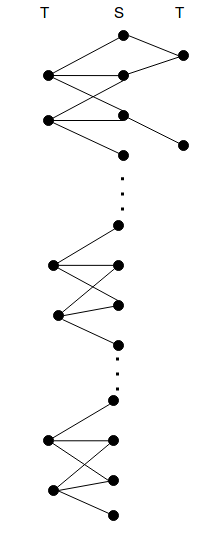}
\caption{A counterexample to Hamiltonian path algorithm of \cite{esha}} \label{counteresha}
\end{center}
\end{figure}
\section{Non-monotone Convex Bipartite Graphs}
\label{nonmono}
In this section, we shall present a few of our structural results which we believe can be used in discovering algorithms for Hamiltonian paths in non-monotone convex bipartite graphs.
\begin{lemma}
\label{c1}
Let $G$ be a convex bipartite graph satisfying Property B.  If $G$ has {\em maximal}$(X_{p..q})$ and {\em maximal}$(X_{p'..q'})$ such that $1 < p < q < n$, $1 < p' < q' < n$, $1 < p < p'$ and $q < q' < n$, then $p < q < p' < q'$.
\end{lemma}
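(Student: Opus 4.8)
The plan is to argue by contradiction. Since we already have $p<q$, $p'<q'$, $p<p'$ and $q<q'$, the only way the conclusion $p<q<p'<q'$ can fail is if $p'\le q$, i.e.\ the two maximal blocks overlap or touch. I will show that this overlap forces the wider interval $X_{p..q'}$ to be a \emph{tight} set (one with $|N_G[X_{p..q'}]|=q'-p+1$) that lies in the interior and strictly extends $X_{p..q}$ to the right, directly contradicting the maximality of $X_{p..q}$.

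First I would record the interval description of closed neighborhoods coming from convexity: for any $a\le b$, $y\in N_G[X_{a..b}]$ iff $a\le left(y)$ and $right(y)\le b$. From this, two set relations are immediate whenever $a\le c\le b\le d$: since $X_{a..b}\subseteq X_{a..d}$ and $X_{c..d}\subseteq X_{a..d}$ we get $N_G[X_{a..b}]\cup N_G[X_{c..d}]\subseteq N_G[X_{a..d}]$; and membership in both of $N_G[X_{a..b}], N_G[X_{c..d}]$ forces $\max(a,c)=c\le left(y)$ and $right(y)\le\min(b,d)=b$, so $N_G[X_{a..b}]\cap N_G[X_{c..d}]=N_G[X_{c..b}]$. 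Inclusion--exclusion then yields $|N_G[X_{a..d}]|\ge (b-a+1)+(d-c+1)-|N_G[X_{c..b}]|$, a supermodularity-type estimate that is the engine of the whole argument.

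Next I would instantiate this with $a=p$, $b=q$, $c=p'$, $d=q'$, so that the overlap hypothesis reads $p< p'\le q< q'$. Using the tightness of the two maximal sets, $|N_G[X_{p..q}]|=q-p+1$ and $|N_G[X_{p'..q'}]|=q'-p'+1$, together with the upper bound of Property B applied to the interior set $X_{p'..q}$, namely $|N_G[X_{p'..q}]|\le q-p'+1$, the estimate collapses to $|N_G[X_{p..q'}]|\ge q'-p+1$. On the other hand $X_{p..q'}$ is itself interior (since $1<p$ and $q'<n$), so the upper bound of Property B gives $|N_G[X_{p..q'}]|\le q'-p+1$; hence equality holds and $X_{p..q'}$ is tight. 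Because $q'>q$ and $1<p\le p<q<q'<n$, this tight set is precisely a witness forbidden by the maximality of $X_{p..q}$ (the clause $1<p'\le p<q<q'<n$ of the definition of \emph{maximal}), a contradiction. Thus $p'>q$, which together with the hypotheses gives $p<q<p'<q'$.

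I expect the main obstacle to be the degenerate touching case $p'=q$, where the overlap interval $X_{p'..q}=X_{q..q}$ is a single vertex and the upper bound of Property B, stated only for $p<q$, does not apply verbatim. Here I would observe that the intersection $N_G[X_{q..q}]$ consists exactly of those $y$ with $N_G(y)=\{x_q\}$, i.e.\ pendants at $x_q$; by the pendant bound of Property B there is at most one such vertex, so $|N_G[X_{q..q}]|\le 1 = q-q+1$ and the same inclusion--exclusion estimate carries through unchanged. (If there were no such vertex, the estimate would give $|N_G[X_{p..q'}]|\ge q'-p+2$, overshooting the upper bound outright, which is itself the contradiction.) Handling this boundary case cleanly, and verifying that the produced tight set matches one of the two clauses of the maximality definition, are the only points that require care; the rest is the routine inclusion--exclusion computation above.
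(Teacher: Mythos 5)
Your proof is correct, and its skeleton matches the paper's: assume the overlap $p<p'\le q<q'$, exploit the convexity identity $N_G[X_{p..q}]\cap N_G[X_{p'..q'}]=N_G[X_{p'..q}]$ together with tightness of the two maximal sets, and derive a contradiction either with the upper bound of Property B or with the maximality of $X_{p..q}$; your no-pendant/one-pendant dichotomy at $p'=q$ is exactly the paper's Cases 1 and 2. Where you genuinely diverge is in the strict-overlap case $p'<q$: the paper first pins the intersection size down exactly, claiming $|N_G[X_{p'..q}]|=q-p'$ by separately excluding $|N_G[X_{p'..q}]|=q-p'+1$ (its Case 3a, a tight-superset contradiction like yours) and $|N_G[X_{p'..q}]|<q-p'$ (its Case 3b), and only then computes $|N_G[X_{p..q'}]|=q'-p+2$. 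Your single estimate $|N_G[X_{p..q'}]|\ge(q-p+1)+(q'-p'+1)-|N_G[X_{p'..q}]|$, using only the a priori bound $|N_G[X_{p'..q}]|\le q-p'+1$, subsumes both subcases at once: any slack in the intersection overshoots the upper bound outright (giving $|N_G[X_{p..q'}]|\ge q'-p+2$ or more), while equality yields the tight interior set $X_{p..q'}$ forbidden by the clause $1<p'\le p<q<q'<n$ of the maximality definition. This buys you a real repair: the paper's Case 3b step --- that $|N_G[X_{p..q}]\setminus N_G[X_{p'..q}]|>p'-p+1$ ``violates the upper bound'' --- is shaky as written, since that difference set is not of the form $N_G[X_{a..b}]$ for any interval (its members need only have left endpoint below $p'$, with right endpoints reaching anywhere up to $q$), whereas in your argument a deficient intersection needs no separate treatment. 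Your explicit handling of the degenerate touching case (where $X_{p'..q}$ is a single vertex, the upper bound does not apply verbatim, and the pendant bound gives $|N_G[X_{q..q}]|\le 1$) is also slightly more careful than the paper's, which runs the same dichotomy but loosely calls the resulting tight set ``maximal'' rather than merely a tight witness contradicting maximality.
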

\begin{proof}
If, on the contrary, $p < p' \leq q < q'$.\\ 
\textbf{Case 1:} $p < p' = q < q'$ and has no pendant vertex $y \in Y$ adjacent to $x_{p'}$. Then,
 $|N_G[X_{p..q'}]|=(q-p+1)+(q'-p'+1)=(q-p+1)+(q'-q+1)=q'-p+2$, violating the upper bound of Property B.\\
\textbf{Case 2:} $p < p' = q < q'$ and has a pendant vertex $y \in Y$ adjacent to $x_{p'}$. Then,
 $|N_G[X_{p..q'}]|=(q-p)+(q'-p')+1=q'-p+1$. This implies that the set $X_{p..q'}$ is maximal, violating the maximality of $X_{p..q}$ and $X_{p'..q'}$. \\
\textbf{Case 3:} $p < p' < q < q'$.  We now claim that $|N_G[X_{p'..q}]|=q-p'$.  We know that, if $|N_G[X_{p'..q}]|> q-p'+1$ then, the graph $G$ violates the upper bound of Property B. Now, we argue that neither $|N_G[X_{p'..q}]|=q-p'+1$ nor $|N_G[X_{p'..q}]|<q-p'$ holds true.  Assume on the contrary, \\
\textbf{Case 3a:} $|N_G[X_{p'..q}]|=q-p'+1$. Then, { we observe that $|N_G[X_{p..q}]\setminus N_G[X_{p'..q}]|=(q-p+1)-(q-p'+1)=p'-p$.  Similarly, $ |N_G[X_{p'..q'}] \setminus N_G[X_{p'..q}]|=(q'-p'+1)-(q-p'+1)=q'-q$.  Therefore, $|N_G[X_{p..q'}]|=|N_G[X_{p..q}] \setminus N_G[X_{p'..q}]|+|N_G[X_{p'..q'}] \setminus N_G[X_{p'..q}]|+|N_G[X_{p'..q}]|=(p'-p)+(q'-q)+(q-p'+1)=q'-p+1$. } This contradicts the maximal property of the sets  $X_{p..q}$ and $X_{p'..q'}$. \\
\textbf{Case 3b:} $|N_G[X_{p'..q}]| < q-p'$.   {Let $|N_G[X_{p'..q}]|=q-p'-k$, $k\geq 1$. Then, $|N_G[X_{p..q}]\setminus N_G[X_{p'..q}]|=(q-p+1)-(q-p'-k)= (p'-p+1+k)>(p'-p+1)$},  violating the upper bound of Property B.   Therefore, our claim $|N_G[X_{p'..q}]|=q-p'$ is true.  Thus, we get
\begin{align*}
|N_G[X_{p..q'}]| &= |N_G[X_{p..q}]|+|N_G[X_{p'..q'}]|-|N_G[X_{p'..q}]|\\ &=(q-p+1)+(q'-p'+1)-(q-p') \\ &=q'-p+2
\end{align*}
which is again the upper bound violation of Property B.  This completes the proof by contradiction and our claim that $p < q < p' < q'$ follows. \qed

\end{proof}
We define the set $Q=\{X_{p..q}~|~X_{p..q}$ is maximal, $1<p<q<n\}$  and the set $R=\{u~|~u \in Y$ and $u$ is pendant and there does not exist $X_{p..q} \in Q$ such that $u \in N_G(x_i), p \leq i \leq q \}$. 
In the next lemma, we show that when $G$ is non-monotone, we can bound the sum of $R$ and $Q$.
\begin{lemma}
Let $G$ be a convex bipartite graph satisfying Property B. If $G$ is non-monotone, then $1 \leq |R|+|Q| \leq 2$.
\end{lemma}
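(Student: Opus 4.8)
The plan is to prove the two bounds separately. The lower bound is immediate from the definition of non-monotone, while the upper bound is the substance of the lemma.

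For the lower bound I would unwind the two ways $G$ can be non-monotone. If $G$ satisfies the second non-monotone condition, there is a \emph{maximal} $X_{p..q}$ with $1<p<q<n$, so $X_{p..q}\in Q$ and $|Q|\ge 1$. Otherwise $G$ must satisfy the first condition, so some pendant $y\in Y$ is adjacent to an internal vertex $x_k$, $1<k<n$; since the second condition fails we have $Q=\emptyset$, the defining requirement of $R$ holds vacuously, and hence $y\in R$, giving $|R|\ge 1$. In either case $|R|+|Q|\ge 1$.

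For the upper bound I would first record that $R$ consists only of pendant vertices, so $|R|$ is at most the total number of pendants, which is at most $2$ by the pendant bound of Property B. It then remains to rule out $|R|+|Q|\ge 3$ when $|Q|\ge 1$, and I would argue by contradiction. By Lemma \ref{c1} the index ranges of the members of $Q$ are pairwise disjoint, so I can order all the obstructions (the sets of $Q$ together with the neighbours $x_k$ of the pendants of $R$) from left to right. Two sub-arguments then come into play. When an uncovered pendant lies in the span, I would take $X_{p..q}$ to reach from an enclosed maximal set to a point just past the pendant's neighbour and show that the tight contributions $q_i-p_i+1$ of the enclosed maximal sets, the forced contributions of the connecting region, and the extra $+1$ from the pendant (distinct, since its neighbour lies in no member of $Q$) together push $|N_G[X_{p..q}]|$ above $q-p+1$, contradicting the upper bound. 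When instead there are three or more maximal sets, I would argue that if the connecting regions carried enough vertices to force such an overflow they would make the whole spanning interval tight, so by the maximality clause (and Lemma \ref{c1}) the sets would fuse into a single larger maximal internal set, contradicting their being distinct members of $Q$.

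The hard part will be controlling the connecting regions, i.e.\ showing that the slack there is exactly enough to separate the maximal sets but never enough to absorb an uncovered pendant. This is where all three clauses of Property B must be used together: the upper bound caps each interval from above, the lower bound forces enough $N'_G$-vertices to keep the regions connected and so pins them from below, and the maximality definition forbids any tight internal extension of a member of $Q$, while convexity bounds the spans of the connecting bridges in the squeeze. A delicate point I would have to handle carefully is the placement of each pendant: an uncovered pendant can never sit adjacent to the boundary of a maximal set, for the one-step extension of that set would then become tight and violate its maximality, so I must first locate each pendant inside a genuinely tight portion of a connecting region before the overflow argument applies. An alternative route I would keep in reserve, which may be cleaner, is to show that each member of $Q$ and each pendant of $R$ forces a \emph{distinct} pendant vertex (from either part) in its surrounding gap unless two obstructions fuse; three obstructions would then force at least three pendants, again contradicting the pendant bound of Property B.
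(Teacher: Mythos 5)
Your lower-bound argument and the bound $|R|\le 2$ are fine and match the paper (if $Q=\emptyset$ the membership condition for $R$ is vacuous, so a pendant at an internal $x_k$ lands in $R$). But the heart of the lemma --- ruling out $|R|+|Q|\ge 3$ once $|Q|\ge 1$ --- is left as a sketch whose central step does not go through as stated. Your overflow argument wants the ``forced contributions of the connecting region'' to inflate $|N_G[X_{p..q}]|$ for a local window $X_{p..q}$ reaching from a maximal set to just past a pendant's neighbour. The trouble is that the lower bound of Property B controls $N'_G[\cdot]$, i.e.\ vertices with at least two neighbours \emph{in} the interval, and such a vertex need not belong to $N_G[X_{p..q}]$: by convexity its neighbourhood may protrude past either end of your window, in which case it contributes nothing to the quantity $|N_G[X_{p..q}]|\le q-p+1$ you are trying to violate. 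You flag exactly this difficulty yourself (``the hard part will be controlling the connecting regions''), but the slack cannot be controlled window-by-window; protruding vertices are only pinned down when all of $Y$ is counted at once. Similarly, in the case of three or more maximal sets you assert that an overflow ``would make the whole spanning interval tight,'' so the sets fuse --- no reason is given why $|N_G[X_{p_1..q_k}]|$ would equal $q_k-p_1+1$ --- and your reserve idea that each obstruction forces a distinct pendant is not a fact you can appeal to (a maximal internal set need not create any pendant).

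The missing ingredient, which the paper's proof is built on and which you never invoke, is the \emph{global} cardinality bound $|Y|=|N_G[X_{1..n}]|\le n+1$ from the upper bound of Property B. The paper applies the lower bound of Property B simultaneously to all the gap intervals $X_{1..p_1},X_{q_1..p_2},\dots,X_{q_k..n}$ (legitimate because Lemma \ref{c1} makes the members of $Q$ pairwise disjoint, so the gaps satisfy the interleaving condition of the lower bound), forcing at least $(n-1)+\sum_{j}p_j-\sum_{j}q_j$ vertices into the union of their $N'_G$-sets. These vertices are disjoint from the sets $N_G[X_{p_j..q_j}]$, each of size exactly $q_j-p_j+1$ by maximality (a vertex of $N_G[X_{p_j..q_j}]$ meets any gap in at most one shared endpoint, hence is in no gap's $N'_G$-set), and disjoint from the pendants of $R$ (degree one, with neighbour outside every member of $Q$). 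Summing the three disjoint families against $|Y|\le n+1$ yields $(n-1)+|Q|+|R|\le n+1$, i.e.\ $|Q|+|R|\le 2$, which is exactly the case analysis 2.2--2.5 of the paper. That single global inequality is what your local windows cannot reproduce; without it the proposal has a genuine gap.
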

\begin{proof} We shall present a proof by contradiction.\\
\textbf{Case 1:} $|R|+|Q| = 0$. Follows from the definition that $G$ is monotone.\\
\textbf{Case 2:} $|R|+|Q| \geq 3$. \\
Case 2.1: $|R|\geq 3$ and $|Q|\geq 0$. Then, the pendant bound of Property B is violated.\\
Case 2.2: $|R|\geq 0$ and $|Q|\geq 3$.  Without loss of generality, let us assume $|Q|=k$, $k \geq 3$.  Note that $Q = \{X_{p_i..q_i} ~|~ X_{p_i..q_i}$ is {\em maximal}, $1 \leq i \leq k$ and $1 < p_i < q_i < n\}$.  Further, from Lemma \ref{c1}, we know that $p_i < q_i < p_{i+1} < q_{i+1}$.  From Property B, we know that,
\begin{align*}
|N'_G[X_{1..p_1}]\cup \bigcup_{j=1}^{k-1} N'_G[X_{q_j..p_{j+1}}]\cup N'_G[X_{q_k..n}]| &\geq (p_1-1)+ \sum_{j=1}^{k-1}(p_{j+1}-q_j)+(n-q_k)\\ &= (n-1)+(\sum_{j=1}^{k}p_j)-(\sum_{j=1}^{k}q_j) ~~~~~~~~~~~~~~~~~~~~~~~~~~~ (1)
\end{align*}
Note that by Property B, $|N_G[X_{1..n}]| \leq (n+1)$, and hence $|Y|\leq (n+1)$.  Since each $X_{p_i..q_i}$ is maximal, $|N_G[X_{p_i..q_i}]|=q_i-p_i+1$.  Thus, we get
\begin{align*}
|N'_G[X_{1..p_1}]\cup \bigcup_{j=1}^{k-1} N'_G[X_{q_j..p_{j+1}}]\cup N'_G[X_{q_k..n}]| &\leq  (n+1)-|(\bigcup_{j=1}^{k} N_G[X_{p_j..q_j}])|\\&= (n+1)-(\sum_{j=1}^{k}(q_j-p_j+1))\\ &= (n+1-k)+(\sum_{j=1}^{k}p_j)-(\sum_{j=1}^{k}q_j) ~~~~~~~~~~~~~~~~~~~~~~~~ (2)
\end{align*} 
Since $k \geq 3$, (2) contradicts (1).\\
Case 2.3: $|R|=1$ and $|Q|=2$. Let $X_{p_1..q_1}$ and $X_{p_2..q_2}$,  $1 < p_1 < q_1 < p_2 < q_2 < n$, are maximal in $Q$.  By Property B, we get,
\begin{align*}
|N'_G[X_{1..p_1}]\cup N'_G[X_{q_1..p_2}]\cup N'_G[X_{q_2..n}]| &\geq (p_1-1)+(p_2-q_1)+(n-q_2)\\ &= (n-1)+(p_1+p_2)-(q_1+q_2)
\end{align*}
Since $|Y|\leq (n+1)$ and $|N_G[X_{p_1..q_1}]|=q_1-p_1+1$ and $|N_G[X_{p_2..q_2}]|=q_2-p_2+1$, 
\begin{align*}
|N'_G[X_{1..p_1}]\cup N'_G[X_{q_1..p_2}]\cup N'_G[X_{q_2..n}]| &\leq (n+1)-|N_G[X_{p_1..q_1}]|-|N_G[X_{p_2..q_2}]|-|R|\\ &= (n+1)-(q_1-p_1+1)-(q_2-p_2+1)-1\\ &= (n-2)+(p_1+p_2)-(q_1+q_2)
\end{align*}
which violates the above equation.\\
Case 2.4: $|R|=2$ and $|Q|=1$.  Let $X_{p..q}$ be maximal in $Q$.  Observe that
\begin{align*}
|N'_G[X_{1..p}]\cup N'_G[X_{q..n}]| &\geq (p-1)+(n-q)\\ &= (n-1)+(p-q)
\end{align*}
Further,
\begin{align*}
|N'_G[X_{1..p}]\cup N'_G[X_{q..n}]| &\leq (n+1)-|N_G[X_{p..q}]|-|R|\\ &=(n+1)-(q-p+1)-2\\ &= (n-2)+(p-q).
\end{align*}
which violates the above equation.\\
Case 2.5: $|R|=2$ and $|Q|=2$. {Let $X_{p_1..q_1}$ and $X_{p_2..q_2}$,  $1 < p_1 < q_1 < p_2 < q_2 < n$, are maximal in $Q$}. Note that
\begin{align*}
|N'_G[X_{1..p_1}]\cup N'_G[X_{q_1..p_2}]\cup N'_G[X_{q_2..n}]| &\geq (p_1-1)+(p_2-q_1)+(n-q_2)\\ &= (n-1)+(p_1+p_2)-(q_1+q_2)
\end{align*}
Further,
\begin{align*}
|N'_G[X_{1..p_1}]\cup N'_G[X_{q_1..p_2}]\cup N'_G[X_{q_2..n}]| &\leq (n+1)-|N_G[X_{p_1..q_1}]|-|N_G[X_{p_2..q_2}]|-|R|\\ &= (n+1)-(q_1-p_1+1)-(q_2-p_2+1)-2\\ &= (n-3)+(p_1+p_2)-(q_1+q_2)
\end{align*}
which violates the above equation.  From the above case analysis, it follows that $1 \leq |R|+|Q| \leq 2$.  \qed
\end{proof}
{\bf Conclusions and directions for further research:}  In this paper, we have presented a structural characterization and a linear-time algorithm for HAMILTONIAN CYCLE.  Further, on the class of monotone convex bipartite graphs, we have shown that HAMILTONIAN PATH is linear-time solvable.  We have also made an attempt in extending these ideas to study the Hamiltonian path problem in non-monotone convex bipartite graphs and longest path and minimum leaf spanning tree problems in convex bipartite graphs.   An interesting direction for further research is to study the complexity of dominating sets and its variants such as total dominating set, total outer connected dominating set, etc., restricted to convex bipartite graphs.  \\ \\
{\bf Acknowledgements:} The authors wish to appreciate  and thank N Narayanan (IIT Madras), C Venkata Praveena (IIITDM Kancheepuram) and P Renjith (IIIT Kottayam) for sharing their insights on this problem.

\end{document}